\newtheorem{theorem}{Theorem}[section]
\newtheorem{exa}{Example}[section]
\newtheorem{lem}{Lemma}[section]
\newtheorem{proposition}{Proposition}[section]
\newtheorem{remark}{Remark}[section]
\newtheorem{definition}{Definition}[section]
\newtheorem{lemma}{Lemma}[section]
\def\be{\begin{equation}} 
\def\ee{\end{equation}} 
\def\beqn{\begin{eqnarray}} 
\def\eeqn{\end{eqnarray}} 
\def\beq{\begin{eqnarray*}} 
\def\eeq{\end{eqnarray*}} 
\def\ba{\begin{array}} 
\def\ea{\end{array}} 
\newcommand{\bt}{\begin{theo}}
\newcommand{\et}{\end{theo}}
\newcommand{\bl}{\begin{lem}}
\newcommand{\el}{\end{lem}}
\newcommand{\bc}{\begin{co}}
\newcommand{\ec}{\end{co}}
\newcommand{\bp}{\begin{pr}}
\newcommand{\ep}{\end{pr}}
\newcommand{\bex}{\begin{exa}}
\newcommand{\eex}{\end{exa}\vspace{-4mm}}
\newcommand{\br}{\begin{re}}
\newcommand{\er}{\end{re}\vspace{-3mm}}
\journal{arXiv.org}
\begin{document}

\begin{frontmatter}

\title{Robust optimized certainty equivalents and quantiles for loss positions with distribution uncertainty\tnoteref{mytitlenote}}
\tnotetext[mytitlenote]{The authors appreciated  the support of  the NSFC grant (No. 12171471).}

\author[math]{Weiwei Li}
\ead{liweiwei\_vv@163.com}

\author[math]{Dejian Tian\corref{correspondingauthor}}
\ead{djtian@cumt.edu.cn}

\address[math]{School of Mathematics, China University of Mining and Technology, Xuzhou, P.R. China}
\cortext[correspondingauthor]{Corresponding author}

\begin{abstract}
The paper investigates the robust optimized certainty equivalents and analyzes the relevant properties of them as risk measures for loss positions with distribution uncertainty. On this basis, the robust generalized quantiles are proposed and  discussed.  The robust expectiles with two specific penalization functions $\varphi_{1}$ and $\varphi_{2}$ are further considered respectively.  The robust expectiles with  $\varphi_{1}$ are proved to be coherent risk measures, and the dual representation theorems are established.  In addition, the effect of penalization functions on the robust expectiles and its comparison with expectiles are examined and simulated numerically.
\end{abstract}

\begin{keyword}
robust optimized certainty equivalents \sep robust quantiles \sep robust expectiles \sep distribution uncertainty


\end{keyword}

\end{frontmatter}



\section{Introduction}

From a risk regulator's point of view, \cite{ADEH97} and \cite{ADEH99} first propose the axiomatic definition of coherent risk measures, which satisfy the following properties: monotonicity, translation invariance, positive homogeneity and subadditivity. Then, \cite{H00}, \cite{FR02}, \cite{FS02} independently introduce the convex risk measures, which also need to satisfy the convexity.

Based on the degree of risk aversion of financial agents,  \cite{BT86} put forward the \textit{optimized certainty equivalent}, which is a decision theoretic criterion on some utility functions. \cite{BT07} further examine that the optimized certainty equivalents satisfy the axiomatic definition of convex risk measures, and obtain their dual representation theorem. The optimized certainty equivalents consider the extremum problem of a stochastic nonlinear programming.

Risk measures can also be defined by the optimal solution of the extremum problem such as the VaR (Value at Risk).   VaR, a quantile function of the loss position,  is a simple and reasonable risk measures set by risk regulators for the banking industry, see \cite{DP97}.  VaR has good properties, such as homogeneity, translation invariance and monotonicity. However, as a risk measure, VaR is not coherent or convex, and it cannot capture the tail risk and does not pay attention to the scale of loss. Therefore, scholars have attempted to construct convex risk measures or coherent risk measures based on VaR from two perspectives.

The first one perspective is based on the VaR itself. For example, \cite{ADEH99} and \cite{D02} propose WCE (Worst conditional expectation) and TCE (Tail conditional expectation). Further, \cite{A02}, \cite{AT02}, \cite{RU02}, \cite{T02}, \cite{C06} and other scholars introduced Conditional VaR, Expected shortfall, Tail VaR, Average VaR, Weight VaR, etc.  Thus, a more reasonable and effective risk measurement model system is gradually established. The reader can refer to Chapter 4 of \cite{FS16}.

The other perspective is based on the equivalent characterization of VaR.  \cite{KB78} propose $VaR_\alpha(X)$ is the the solution of the following minimization problem
\begin{align*}
\min _{m \in \mathbb{R}}\left\{\alpha E_P\left[(X-m)^{+}\right]+(1-\alpha) E_P\left[(X-m)^{-}\right]\right\}, \alpha \in (0,1).
 \end{align*}
Based on this minimization problem, several quantiles have been introduced by considering more general loss functions.  \cite{NP87} introduce the expectiles as the minimizers of a piecewise quadratic loss function.  \cite{BC88} discuss M-quantiles and \cite{C96} considers the power loss function respectively.  \cite{BKMR14} study the generalized quantiles, which is the optimal solution of
\begin{align*}
 \underset{m \in \mathbb{R}}{\operatorname{min}}\left\{\alpha E_P\left[l_{1}\left((X-m)^{+}\right)\right]+(1-\alpha) E_P\left[l_{2}\left((X-m)^{-}\right)\right]\right\},
\end{align*}
where $l_{1}$, $l_{2}$ are increasing convex functions, and the only generalized quantiles that are coherent risk measures are the expectiles with $\alpha>1/2$.
Besides, \cite{M18} propose new generalized quantiles based on rank-dependent expected utility (RDEU).  The generalized quantiles have been proved to have important applications in risk measurement and mathematical finance.  For example, the reader can refer to \cite{TD02}, \cite{M15}, \cite{C19} and \cite{XZH23} etc.


Whether the optimized certainty equivalents or the generalized quantiles, we assume that we know the exact distribution of the loss position in a given probability space. However, the distributions of future losses are uncertain. Recently, \cite{BDT20} consider the distribution uncertainty problems and
propose the robust optimized certainty equivalent, which is defined as\footnote{In fact, given a prior distribution $\mu_{0}$, the original definition of the robust optimized certainty equivalents by \cite{BDT20} is
\begin{align*}
\mathcal{OCE}(l):=\inf _{m \in \mathbb{R}}\left\{m+\sup _{\mu \in \mathcal{M}(\mathbb{R})}\left(\int_{\mathbb{R}} l(x-m)\mu(dx)-\varphi\left(d_{c}\left(\mu_{0}, \mu\right)\right)\right)\right\}.
\end{align*}In our situation, we emphasize the loss position $X$. }
\begin{align}\label{ocex}
\mathcal{OCE}_{l, \varphi}(X):=\inf _{m \in \mathbb{R}}\left\{m+\sup _{\mu \in \mathcal{M}(\mathbb{R})}\left(\int_{\mathbb{R}} l(x-m)\mu(dx)-\varphi\left(d_{c}\left(\mu_{X}, \mu\right)\right)\right)\right\},
\end{align}
with $l:\mathbb{R}\to\mathbb{R}$ the loss function, $\mathcal{M}(\mathbb{R})$ the set of probability on $\mathbb{R}$,   $\varphi$ being a penalization function, $d_{c}$ the Wasserstein distance with cost function $c(x,y)$, $\mu_{X}$ a given priori distribution of loss position $X$.  We generally consider $\mu_{X}$ is likely to the true distribution of future losses.

Motivated by the robust optimized certainty equivalents, for any two functions $l_{1}$ and $l_{2}$, and $h(x)=\alpha l_{1}\left(x^{+}\right)+(1-\alpha)l_{2}\left(x^{-}\right)$ with $\alpha \in (0,1)$, this paper proposes the \textit{robust generalized quantiles} $q_{\alpha,\varphi}(X)$ for $X$ which satisfy
$$q_{\alpha,\varphi}(X) \in \underset{m \in \mathbb{R}}{\operatorname{argmin}}\hspace{0.3em} \left\{\sup _{\mu \in \mathcal{M}(\mathbb{R})}\left(\int_{\mathbb{R}} h(x-m)\mu(dx)-\varphi\left(d_{c}\left(\mu_{X}, \mu\right)\right)\right)\right\}.$$ The robust generalized quantiles are the natural generalization of quantiles for loss positions with distribution uncertainty.

The paper contributes to the literature in the following three aspects.  First, in order to investigate the loss positions with distribution uncertainty,  the definition of the robust optimized certainty equivalents is refined.  On this basis,  we can study their properties as  risk measures and find that they satisfy translation invariance, monotonicity and convexity (see Proposition \ref{prop:2-1}). Furthermore, Proposition \ref{prop:2-2} analyzes the influences of loss function and penalization function on the robust optimized certainty equivalents. The loss function reflects the agent's risk aversion,  while the penalization function reflects the agent's robust aversion. The larger loss function implies the stronger risk aversion,  and the larger penalization function implies the weaker robust aversion, which means the higher the reliability of the priori distribution.  A reachability condition is also given for the robust optimized certainty equivalents.

Second, we propose the robust generalized quantiles, which incorporate with  the distribution uncertainty of the loss positions.  Proposition \ref{prop:2-4} provides the sufficient conditions for the existence of the robust generalized quantiles.  For any
$\alpha\in(0,1)$, when $h(x)=\alpha x^++(1-\alpha) x^-$ and  the cost function $c(x,y)=|x-y|$, Proposition \ref{prop:2-5} shows that the robust generalized quantiles degenerates into classical VaR for any penalization function $\varphi$. It indicates that the robust generalized quantiles displays the robustness for penalization function in this specification.

Third,  we consider two kinds of robust generalized quantiles by introducing two specific penalization functions $\varphi_1(x)=\delta_1 x$ with $\delta_1>max\{\alpha,1-\alpha\}$ and $\varphi_2(x)=\infty I_{(\delta_2,+\infty)}(x)$ with  $\delta_{2}>0$.  We call them the \textit{robust expectiles with $\varphi_1$} and the \textit{robust expectiles with $\varphi_2$} respectively.   By using dual formula,  we transform the problem of solving for the robust expectiles into the minimum problem of finite dimensions,  so as to further study their properties as risk measures and the impact of penalization parameters on them.  We find that the robust expectiles with $\varphi_1$  are coherent risk measures when $\alpha >1/2$,  and we establish the dual representations theorem (see Theorem \ref{th:3-1} and Theorem \ref{th:3-2}). Robust expectiles with $\varphi_2$ are also studied.  Besides,  we also provide the comparisons between the robust expectiles and the expectiles under some specific prior distributions.

The paper is organized as follows. Section \ref{sec2} considers the properties of the robust optimized certainty equivalents  and proposes the definition of the robust generalized quantiles. Section \ref{sec3} mainly focuses on two kinds of specific robust expectiles corresponding to two popular penalization functions. All the proofs are relegated to Section \ref{sec4}. Section \ref{sec5} concludes the paper.

\section{Robust optimized certainty equivalents and generalized quantiles}\label{sec2}

In this section, after introducing the robust optimized certainty equivalents, originally proposed by \cite{BDT20}, we then investigate its properties and give the definition of the robust generalized quantiles.

\subsection{Robust optimized certainty equivalents}

Let $(\Omega, \mathcal{F})$ be a measurable space, and there exists a priori probability measure $P$ on it.  Let $X$ be a random variable from $(\Omega, \mathcal{F})$ to $(\mathbb{R}, \mathcal{B}{(\mathbb{R})})$. The  priori  distribution  or law of $X$ is defined as the probability measure on the line given by
$$\mu_X(A):=(P\circ X^{-1})(A):=P(X^{-1}(A)), \qquad \forall A\in\mathcal{B}{(\mathbb{R})}. $$
Then, $(\mathbb{R}, \mathcal{B}{(\mathbb{R})}, \mu_X)$ is a probability space, and $\mu_X$ is  the priori  distribution of $X$. Denotes $\mathcal{M}(\mathbb{R})$ a set  of all probabilities on the $\mathcal{B}{(\mathbb{R})}$. Then, for any measurable and bounded from below function $l:\mathbb{R}\to \mathbb{R}$,  we can define the robust optimized certainty equivalents with respect to $X$, i.e.,
\begin{align}\label{eq:21}
\mathcal{OCE}_{l, \varphi}(X):=\inf _{m \in \mathbb{R}}\left\{m+\mathcal{E}_{\varphi}(l,X,m)\right\},
\end{align}
where the nonlinear functional $\mathcal{E}_{\varphi}(l, X,m)$ is defined as
\begin{equation}\label{eq:2-2}
\mathcal{E}_{\varphi}(l,X,m):=\sup _{\mu \in \mathcal{M}(\mathbb{R})}\left(\int_{\mathbb{R}} l(x-m)\mu(dx)-\varphi\left(d_{c}\left(\mu_{X}, \mu\right)\right)\right),
\end{equation}
where $\varphi$ is a penalization function, $d_{c}$ is a distance with cost function $c(\cdot,\cdot)$ such as the Wasserstein distance.

We give some specifications as follows:
\begin{itemize}
\item A loss function $l$:  $\mathbb{R}\to \mathbb{R}$, which is measurable and bounded from below.
\item A penalization function $\varphi$: $[0,+\infty]\rightarrow[0,+\infty]$, which is convex, increasing, lower semicontinuous with $\varphi(0)=0$.  $\varphi^{*}$ is the convex conjugate of $\varphi$, that is $\varphi^{*}(y)=\sup _{x\geq0}(xy-\varphi(x))$.  $\varphi$ and $\varphi^{*}$ are not constants.
\item The cost function $c(x,y)=|x-y|^p$ with $p\geq 1$, for all $x,y\in \mathbb{R}$.
\item The distance $d_{c}$ between $\mu_{X}$ and $\mu$  : for any $\mu \in \mathcal{M}(\mathbb{R})$,
$$
d_{c}\left(\mu_{X}, \mu\right):=\inf \left\{\int_{\mathbb{R} \times \mathbb{R}} c(x, y) \pi(d x, d y): \begin{array}{l}
\pi \in \mathcal{M}(\mathbb{R} \times \mathbb{R}) \text { such that } \\
\pi(\cdot \times \mathbb{R})=\mu_{X} \text { and } \pi(\mathbb{R} \times \cdot)=\mu.
\end{array}\right\}.
$$
\end{itemize}

Unlike \cite{BDT20},  we emphasize the priori distribution of random variable $X$ in the definition of robust optimized certainty equivalent.  Here, we treat $\mu_X$ as the fixed baseline distribution of the random loss $X$.  The reason why the distance $d_{c}$ is a popular choice to model the ambiguity distribution is that one has $d_{c}(\mu_{X},\mu_{n}) \rightarrow 0$ if and only if $\mu_{n}$ converges weakly to $\mu_{X}$ and $\int_{\mathbb{R}} x^p \mu_n(dx) \to \int_{\mathbb{R}} x^p \mu_X(dx)$ for $p\geq 1$, see \cite{V08}. This means that we can use $\varphi(d_{c}(\cdot,\cdot))$ to penalize those distributions that are far away from the baseline distribution $\mu_{X}$ accurately.  Hence, the definition of robust optimized certainty equivalent can be used to describe the distribution uncertainty of the random variables.

Fixing a prior distribution $\mu_{0}$,  \cite{BDT20} mainly solve the computational problem. They does not  stress the loss position $X$ and not take into account the properties of $\mathcal{OCE}_{l, \varphi}(\cdot)$ and not address the effects for different random variables.   Motivated by \cite{BT86} and \cite{BT07},  we consider some properties for the robust optimized certainty equivalents in this paper.

\begin{proposition}\label{prop:2-1}
Let a loss function $l:\mathbb{R}\to \mathbb{R}$ be convex and  increasing and $\varphi$ be a penalization function.  Then the following properties hold.
\begin{itemize}
\item[(a)] Prior distribution invariance: If $X$ and $Y$ have the same prior distribution under $P$, then   $\mathcal{OCE}_{l, \varphi}(X) =\mathcal{OCE}_{l, \varphi}(Y)$.
\item[(b)] Translation invariance: $\mathcal{OCE}_{l, \varphi}(X+C)=\mathcal{OCE}_{l, \varphi}(X)+C$, for any $C\in \mathbb{R}$.
\item[(c)] Monotonicity: If $X\leq Y$, $P$-a.s.,  then $\mathcal{OCE}_{l, \varphi}(X) \leq \mathcal{OCE}_{l, \varphi}(Y)$.
\item[(d)]  Convexity: For any random variables $X$ and $Y$,  and any $t\in(0,1)$, one has
$$\mathcal{OCE}_{l, \varphi}(tX+(1-t)Y)\leq t\mathcal{OCE}_{l, \varphi}(X) + (1-t)\mathcal{OCE}_{l, \varphi}(Y).$$
\end{itemize}
\end{proposition}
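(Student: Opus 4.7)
My plan is to treat the four items in order. Parts (a) and (b) are bookkeeping, while (c) and (d) rest on a single coupling reformulation of $\mathcal{E}_{\varphi}$. Part (a) is immediate from (\ref{eq:21})--(\ref{eq:2-2}), since the right-hand side depends on $X$ only through $\mu_X$. For (b), the cost $c(x,y)=|x-y|^p$ makes the Wasserstein distance translation invariant, $d_c(\mu_X*\delta_C,\mu*\delta_C)=d_c(\mu_X,\mu)$; substituting $m'=m-C$ and re-indexing the inner supremum by $\tilde\mu:=\mu*\delta_{-C}$ in the formula for $\mathcal{OCE}_{l,\varphi}(X+C)$ turns it directly into $C+\mathcal{OCE}_{l,\varphi}(X)$.

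For (c) and (d) I first rewrite $\mathcal{E}_{\varphi}$ as a supremum over couplings: because $\varphi$ is increasing and the Wasserstein distance is itself an infimum,
$$\mathcal{E}_{\varphi}(l,X,m)=\sup_{\pi}\bigl\{E_{\pi}[l(V-m)]-\varphi\bigl(E_{\pi}[c(X,V)]\bigr)\bigr\},$$
where $\pi$ ranges over distributions on $\mathbb{R}^2$ with first marginal $\mu_X$. The $\leq$ direction uses $d_c(\mu_X,\mu_V)\leq E_{\pi}[c(X,V)]$ together with the monotonicity of $\varphi$; the $\geq$ direction follows by approximating, for each $\mu$, the infimum defining $d_c(\mu_X,\mu)$ by admissible couplings. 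For (c), given $X\leq Y$ a.s.\ and any coupling $(X,V)$, I set $V':=V+(Y-X)$; then $(Y,V')$ has first marginal $\mu_Y$, the identity $V'-Y=V-X$ gives $c(Y,V')=c(X,V)$, and $V'\geq V$ combined with monotonicity of $l$ gives $l(V'-m)\geq l(V-m)$. Thus every value attained by a coupling of $X$ is majorized by one of $Y$, so $\mathcal{E}_{\varphi}(l,X,m)\leq\mathcal{E}_{\varphi}(l,Y,m)$ for every $m$, and infimizing yields $\mathcal{OCE}_{l,\varphi}(X)\leq\mathcal{OCE}_{l,\varphi}(Y)$.

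For (d), set $Z=tX+(1-t)Y$, fix $m_1,m_2\in\mathbb{R}$ with $m=tm_1+(1-t)m_2$, and to any coupling $(Z,W)$ attach $W_1:=W+X-Z$ and $W_2:=W+Y-Z$. By construction $tW_1+(1-t)W_2=W$ and $W_1-X=W_2-Y=W-Z$, so $c(X,W_1)=c(Y,W_2)=c(Z,W)$. Convexity of $l$ then provides
$$l(W-m)=l\bigl(t(W_1-m_1)+(1-t)(W_2-m_2)\bigr)\leq tl(W_1-m_1)+(1-t)l(W_2-m_2),$$
and since $\varphi(E[c(Z,W)])$ equals the trivial convex combination $t\varphi(E[c(X,W_1)])+(1-t)\varphi(E[c(Y,W_2)])$, taking expectations and recognizing $(X,W_1)$ and $(Y,W_2)$ as admissible couplings yields
$$E[l(W-m)]-\varphi(E[c(Z,W)])\leq t\mathcal{E}_{\varphi}(l,X,m_1)+(1-t)\mathcal{E}_{\varphi}(l,Y,m_2).$$
Taking the supremum on the left over $(Z,W)$, adding $m$, and infimizing over $m_1,m_2$ delivers the desired convexity inequality.

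The main obstacle is locating the decoupling in (d): the auxiliary variables $W_1,W_2$ must simultaneously split $W$ convexly, preserve the transport cost against \emph{each} of the marginals $X$ and $Y$, and cooperate with the convexity of $l$. Once the substitution $W_i=W+(\text{marginal}_i-Z)$ is spotted, everything reduces to algebra; (c) hinges on the analogous observation that translating $V$ by $(Y-X)$ preserves the $c$-cost while exchanging the first marginal for $\mu_Y$.
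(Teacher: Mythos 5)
Your proof is correct, but it follows a genuinely different route from the paper's. The paper handles (b)--(d) entirely on the dual side: it invokes the duality theorem of Bartl, Drapeau and Tangpi (Lemma \ref{lem:4-1}) to write $\mathcal{OCE}_{l,\varphi}(X)=\inf_{\lambda\geq 0}\{OCE_{l^{\lambda c}}(X)+\varphi^{*}(\lambda)\}$ and then reads off translation invariance, monotonicity and convexity from the corresponding properties of the $\lambda c$-transform (Lemma \ref{lem:4-2}: $l^{\lambda c}$ is increasing and jointly convex in $(x,\lambda)$), together with the convexity of $\varphi^{*}$. You stay on the primal side: you replace the supremum over measures by a supremum over couplings and, from any admissible coupling of $X$ (resp.\ of $tX+(1-t)Y$), manufacture an explicit competitor coupling of $Y$ (resp.\ of $X$ and of $Y$) with identical transport cost via the translations $V'=V+(Y-X)$ and $W_i=W+(\text{marginal}_i-Z)$; monotonicity and convexity of $l$ then finish the job. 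Your argument is more elementary and self-contained in that it does not rest on the strong duality of Lemma \ref{lem:4-1}, while the paper's route has the advantage that the same dual formula is reused throughout the rest of the paper, so these properties come almost for free once the duality is granted.

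Two steps in your write-up deserve to be made explicit. First, in the coupling reformulation the inequality $\inf_{\pi}\varphi\left(E_{\pi}[c(X,V)]\right)\leq\varphi\left(d_{c}(\mu_X,\mu)\right)$ does \emph{not} follow from merely approximating the infimum defining $d_c$: if $\varphi$ fails to be right-continuous at $d_c(\mu_X,\mu)$ (as happens for $\varphi_2(x)=\infty I_{(\delta_2,+\infty)}(x)$ at $x=\delta_2$), a minimizing sequence of couplings only delivers the right limit $\varphi(d_c(\mu_X,\mu)+)$, which may be strictly larger. You should instead invoke attainment of an optimal coupling for the lower semicontinuous cost $|x-y|^p$ (Villani, 2008). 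Second, in (c) and (d) the coupling $\pi$ is a law on $\mathbb{R}^2$, whereas $V'$ and $W_1,W_2$ are built pathwise from $X$, $Y$ and the second coordinate; to legitimize this you must realize $\pi$ jointly with $X$ and $Y$ on a common (possibly enlarged) probability space, e.g.\ by disintegrating $\pi(dx,dv)=\mu_X(dx)K(x,dv)$ and passing to $\Omega\times\mathbb{R}$ equipped with $P(d\omega)K(X(\omega),dv)$. Both points are routine, but they are the price of working on the primal side.
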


\begin{remark}
Although the robust optimized certainty equivalents $\mathcal{OCE}_{l, \varphi}(\cdot)$
involve the distribution uncertainty of random variables,   $\mathcal{OCE}_{l, \varphi}(\cdot)$ still may satisfy some good properties such as monotonicity, translation invariance and convexity.   It is worth noting that $\mathcal{OCE}_{l, \varphi}(\cdot)$ does not necessarily satisfy the property of preserving constants.  For example,  taking $p=1$, $l(x)=1+x^{+}$, for any penalization function $\varphi(\cdot)$,  it is easy to verify that $\mathcal{OCE}_{l, \varphi}(0)=1$.
\end{remark}

The following proposition can be easily obtained from the definition of the robust optimized certainty equivalents, which displays the influences of  loss functions and penalization functions.

\begin{proposition}\label{prop:2-2} Let $X$ be a random variable on the prior probability space $(\Omega, \mathcal{F},P)$. Given some loss functions $f$,   $g$ and $l$,  and some  penalization functions $\phi$, $\psi$ and $\varphi$.  Then the following properties hold.
\begin{itemize}
\item[(a)] If  $f\geq g$, then $\mathcal{OCE}_{f, \varphi}(X) \geq\mathcal{OCE}_{g, \varphi}(X)$.
\item[(b)]  If  $\phi\geq \psi$, then $\mathcal{OCE}_{l, \psi}(X)\geq\mathcal{OCE}_{l, \phi}(X)\geq OCE_{l}(X)$, where
$$OCE_{l}(X)=\inf _{m \in \mathbb{R}}\left\{m+
E_{P}[l(X-m)]\right\}.$$

\end{itemize}
\end{proposition}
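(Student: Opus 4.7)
The plan is to read both statements directly off the min--sup structure in the definition of $\mathcal{OCE}_{l,\varphi}(X)$, exploiting the fact that (i) the inner $\sup$ is a supremum of affine functionals in the pair $(l,-\varphi)$ and (ii) the outer $\inf_{m}$ is a monotone operation. Essentially, this is a propagation--of--inequalities argument; the only nontrivial ingredient in part (b) is the observation that $\mu_X$ itself is an admissible probability in the inner supremum, which gives a canonical lower bound on $\mathcal{E}_\varphi(l,X,m)$.

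For part (a), I would fix $m\in\mathbb{R}$ and $\mu\in\mathcal{M}(\mathbb{R})$ arbitrary. Since $f\geq g$ pointwise, integration against any probability measure preserves the inequality, so
\[
\int_{\mathbb{R}} f(x-m)\mu(dx) - \varphi(d_c(\mu_X,\mu)) \;\geq\; \int_{\mathbb{R}} g(x-m)\mu(dx) - \varphi(d_c(\mu_X,\mu)).
\]
Taking $\sup_{\mu}$ on both sides preserves the inequality, yielding $\mathcal{E}_\varphi(f,X,m)\geq \mathcal{E}_\varphi(g,X,m)$ for every $m$. Adding $m$ and taking $\inf_{m\in\mathbb{R}}$ then gives $\mathcal{OCE}_{f,\varphi}(X)\geq \mathcal{OCE}_{g,\varphi}(X)$.

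For part (b), I treat the two inequalities separately. For the first, $\phi\geq\psi$ implies $-\phi(d_c(\mu_X,\mu))\leq -\psi(d_c(\mu_X,\mu))$ for every $\mu$; adding $\int l(x-m)\mu(dx)$, taking $\sup_{\mu}$ and then $\inf_{m}+m$, the direction of the inequality is preserved and produces $\mathcal{OCE}_{l,\phi}(X)\leq \mathcal{OCE}_{l,\psi}(X)$. For the second inequality, the key step is to choose the admissible measure $\mu=\mu_X$ in the definition of $\mathcal{E}_\varphi(l,X,m)$: since $d_c(\mu_X,\mu_X)=0$ and $\varphi(0)=0$, one obtains
\[
\mathcal{E}_\varphi(l,X,m)\;\geq\;\int_{\mathbb{R}} l(x-m)\mu_X(dx)\;=\;E_P[l(X-m)].
\]
Adding $m$ to both sides and taking $\inf_{m\in\mathbb{R}}$ produces $\mathcal{OCE}_{l,\phi}(X)\geq OCE_l(X)$.

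There is no real obstacle here; the whole argument relies on elementary monotonicity of integration, $\sup$ and $\inf$ operations, together with the single structural observation that $\mu_X$ is always a feasible competitor in the inner supremum and incurs zero penalization. The slightly delicate point worth flagging is that all suprema and infima above may be attained at $\pm\infty$ in principle, but since the inequalities are pointwise in $m$ and $\mu$ before passing to $\sup$/$\inf$, this causes no issue for the monotonicity statements.
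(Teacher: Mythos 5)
Your proposal is correct and follows exactly the route the paper intends: the paper gives no explicit proof of Proposition \ref{prop:2-2}, stating only that it "can be easily obtained from the definition," and your pointwise propagation of inequalities through the inner supremum and outer infimum, together with the observation that $\mu_X$ is feasible with zero penalty since $d_c(\mu_X,\mu_X)=0$ and $\varphi(0)=0$, is precisely that argument. No gaps.
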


Proposition \ref{prop:2-2} indicates that  the larger loss function leads to the larger $\mathcal{OCE}$, while the larger penalization function results in the smaller $\mathcal{OCE}$. On the other hand, the penalty function reflects the agent's trust in the baseline distribution. Hence, the larger the penalty function is, the closer it is to ${OCE}_{l}(X)$ under the priori distribution $\mu_{X}$, which means the higher the reliability of the priori distribution.

Given a random variable $X$ and a penalization function $\varphi$. Let $l:\mathbb{R}\to \mathbb{R}$ be a convex and increasing loss function and $l^{\lambda c}$ be the $\lambda c$-transform of $l$, defined in Lemma \ref{lem:4-1}.  Denote
 $$\Lambda_{l}(X,\varphi):=\left\{\lambda\geq 0 ~|~ \exists~  m\in\mathbb{R}, s.t., ~E_{P}[l^{\lambda c}(X-m)]+\varphi^*(\lambda)< +\infty\right\}. $$
If $\Lambda_{l}(X,\varphi)=\emptyset$, by the definition of robust optimized certainty equivalent and Lemma \ref{lem:4-1},   then it implies that $\mathcal{OCE}_{l,\varphi}(X)\equiv+\infty$.  Suppose $\Lambda_{l}(X,\varphi)\neq\emptyset$,  and define a class of loss functions as follows:
$$
L(X,\varphi):= \left\{ \begin{array}{l} \text{loss function }\\ l:\mathbb{R}\to \mathbb{R} \end{array}: \begin{array}{l}
l  \text { is convex and increasing such that }  \\
\text{for any } \lambda \in \Lambda_{l}(X,\varphi),  l^{\lambda c}(x)\geq l^{\lambda c}(0)+x \text{ for all } x\in\mathbb{R}.
\end{array}\right\}.
$$
Then the following result shows that we can find the optional solution in the support of random variable $X$ when we choose an appropriate loss function and penalization function.

\begin{proposition}\label{prop:2-3}
Let $\varphi$ be a penalization function.  Let $X$ be random variable with compact support $suppX=[x_{min},x_{max}]$,  $(-\infty <x_{min} \leq x_{max} <+\infty)$. Then, for all $l\in L(X,\varphi)$,
\begin{align*}
\mathcal{OCE}_{l, \varphi}(X)=\min _{m \in supp X}\left\{m+\mathcal{E}_{\varphi}(l,X,m)\right\}.
\end{align*}
\end{proposition}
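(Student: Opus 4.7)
The plan is to use Lemma \ref{lem:4-1} to recast $\mathcal{E}_{\varphi}(l,X,m)$ as
$$\mathcal{E}_{\varphi}(l,X,m)=\inf_{\lambda\geq 0}\bigl\{E_{P}[l^{\lambda c}(X-m)]+\varphi^{*}(\lambda)\bigr\},$$
so that the objective $g(m):=m+\mathcal{E}_{\varphi}(l,X,m)$ equals $\inf_{\lambda\geq 0}G_{\lambda}(m)$ with $G_{\lambda}(m):=m+E_{P}[l^{\lambda c}(X-m)]+\varphi^{*}(\lambda)$. When $\Lambda_{l}(X,\varphi)=\emptyset$ the statement is vacuous since $g\equiv+\infty$, so I assume it is non-empty and only those $\lambda\in\Lambda_{l}(X,\varphi)$ contribute finite $G_{\lambda}$.

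The key step is to show that each $G_{\lambda}$ with $\lambda\in\Lambda_{l}(X,\varphi)$ is non-decreasing on $[x_{\max},+\infty)$ and non-increasing on $(-\infty,x_{\min}]$. For this, I reinterpret the defining inequality of $L(X,\varphi)$, $l^{\lambda c}(x)\geq l^{\lambda c}(0)+x$ on $\mathbb{R}$, as the assertion that the convex function $t\mapsto l^{\lambda c}(t)-t$ attains its global minimum at $t=0$; it is therefore non-increasing on $(-\infty,0]$ and non-decreasing on $[0,+\infty)$. For $m>x_{\max}$ the values $X-m$ and $X-x_{\max}$ both lie in $(-\infty,0]$ with $X-m\leq X-x_{\max}$, so
$$l^{\lambda c}(X-x_{\max})-l^{\lambda c}(X-m)\leq (X-x_{\max})-(X-m)=m-x_{\max}$$
$P$-a.s.; taking expectations and rearranging gives $G_{\lambda}(x_{\max})\leq G_{\lambda}(m)$. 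A symmetric argument on $[0,+\infty)$ yields $G_{\lambda}(x_{\min})\leq G_{\lambda}(m)$ for $m<x_{\min}$. Passing to the infimum over $\lambda$ then delivers the same inequalities for $g$, so the infimum defining $\mathcal{OCE}_{l,\varphi}(X)$ can be restricted to the compact interval $[x_{\min},x_{\max}]$.

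Attainment follows because $g$ is convex and lower semicontinuous on $\mathbb{R}$: $\mathcal{E}_{\varphi}(l,X,\cdot)$ is a pointwise supremum over $\mu\in\mathcal{M}(\mathbb{R})$ of the functions $m\mapsto\int_{\mathbb{R}}l(x-m)\mu(dx)-\varphi(d_{c}(\mu_{X},\mu))$, each of which is convex and continuous in $m$ (convexity and continuity of $l$ pass through the integral). The tail comparison in the previous paragraph carries any finite value of $g$ back into $[x_{\min},x_{\max}]$, so $g$ is finite at some point of that compact interval, and a lower semicontinuous function on a compact set attains its minimum.

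The main obstacle is the middle step: the condition defining $L(X,\varphi)$ is pointwise in $\lambda$, whereas $\mathcal{OCE}_{l,\varphi}$ involves an infimum in $\lambda$ through the dual formula, so one needs a tail comparison uniform enough to survive that infimum. Phrasing the defining inequality as ``$t=0$ is a global minimizer of $l^{\lambda c}(t)-t$'' produces the two one-sided monotonicities in a single stroke and lets the $\inf_{\lambda}$ be applied to each resulting inequality without any subdifferential manipulation, which is what makes the reduction to $\mathrm{supp}\,X$ clean.
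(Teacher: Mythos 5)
Your proof is correct, and it reaches the conclusion by a genuinely different route from the paper's. Both arguments start from the same dual formula (Lemma \ref{lem:4-1}) and both exploit the defining inequality of $L(X,\varphi)$, namely $l^{\lambda c}(x)\geq l^{\lambda c}(0)+x$; the difference is in how that inequality is used. The paper differentiates: it forms $F(m)=m+E_P[l^{\lambda c}(X-m)]$, interchanges expectation with one-sided derivatives via dominated convergence, reads the $L(X,\varphi)$ condition as $(l^{\lambda c})'_{+}(0)\geq 1\geq (l^{\lambda c})'_{-}(0)$, and derives a contradiction from the first-order optimality condition at a presumed minimizer $m^*$ lying outside $\mathrm{supp}\,X$. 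You instead read the same condition as saying that the convex function $t\mapsto l^{\lambda c}(t)-t$ has a global minimum at $0$, hence is monotone on each side of $0$, and convert this into the two-point inequalities $G_{\lambda}(x_{\max})\leq G_{\lambda}(m)$ for $m>x_{\max}$ and $G_{\lambda}(x_{\min})\leq G_{\lambda}(m)$ for $m<x_{\min}$, which pass through $\inf_{\lambda}$ pointwise. Your version buys three things: it needs no interchange of expectation and differentiation; it does not presuppose that each $OCE_{l^{\lambda c}}(X)$ attains its infimum (the paper's phrase ``if $m^*$ is the optimal solution'' quietly assumes existence, which your lower-semicontinuity-on-a-compact-interval step supplies); and it makes the final interchange of $\inf_{\lambda}$ with the restriction to $\mathrm{supp}\,X$ transparent. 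Two small points of hygiene: the inequality $G_{\lambda}(x_{\max})\leq G_{\lambda}(m)$ also holds trivially for $\lambda\notin\Lambda_{l}(X,\varphi)$ since then $G_{\lambda}\equiv+\infty$, so the passage to $\inf_{\lambda\geq 0}$ is indeed unrestricted; and the maps $m\mapsto\int l(x-m)\,\mu(dx)$ need not be continuous where they take the value $+\infty$, but they are lower semicontinuous by Fatou (using that $l$ is bounded below and continuous), which is all your attainment argument requires.
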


\subsection{Robust generalized quantiles}
As what we have expected, robust optimized certainty equivalents based on the nonlinear functional $\mathcal{E}_{\varphi}(l, X,m)$ have good properties as risk measures.  Motivated by \cite{BKMR14},  we consider the generalized quantiles  under robust distributions with $\mathcal{E}_{\varphi}(l, X,m)$.

Let $l_1, l_2:[0,+\infty)\rightarrow [0,+\infty)$ be two convex and increasing  loss functions.  For any $\alpha \in (0,1)$,
\begin{align}\label{eq:2.2-1}
h(x):=\alpha l_1(x^+)+(1-\alpha) l_2(x^-), ~~\forall x\in \mathbb{R}.
\end{align}
Then $h$ is a convex loss function.  For a random variable $X$ and a penalization function $\varphi$,   now we consider the following minimization problem
\begin{align*}
\pi_{\alpha}(X):=\underset{m \in \mathbb{R}}{\operatorname{inf}}~\mathcal{E}_{\varphi}(h,X,m),
\end{align*}where $\mathcal{E}_{\varphi}(h,X,m)$ is defined by \eqref{eq:2-2}.
And we call $q_{\alpha,\varphi}(X)$ the \textit{robust generalized quantiles} of $X$ if one has
$$q_{\alpha,\varphi}(X) \in \underset{m \in \mathbb{R}}{\operatorname{argmin}}\hspace{0.3em}\mathcal{E}_{\varphi}(h,X,m).$$

Now, we give a sufficient condition for the existence of the robust generalized quantiles.
\begin{proposition}\label{prop:2-4}
Let $l_1, l_2:[0,+\infty)\rightarrow [0,+\infty)$ be two convex and increasing loss functions. For each $\alpha \in (0,1)$,  a random variable $X$ and a penalization function $\varphi$,  then it follows that
\begin{itemize}
\item[(a)] $\mathcal{E}_{\varphi}(h,X,m)$ is convex with respect to $m$, and
$$
\lim _{m \rightarrow-\infty} \mathcal{E}_{\varphi}(h,X,m)=\lim _{m \rightarrow+\infty} \mathcal{E}_{\varphi}(h,X,m)=+\infty .
$$
\item[(b)] Suppose $\mathcal{E}_{\varphi}(h,X,0)<+\infty $,  then there exists a closed interval $[m_1,m_2]$, such that
$$[m_1,m_2]=\underset{m \in \mathbb{R}}{\operatorname{argmin}}\hspace{0.3em} \mathcal{E}_{\varphi}(h,X,m).$$
\end{itemize}
\end{proposition}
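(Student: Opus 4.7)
The plan is to establish (a) by two independent arguments---convexity via sup-preserves-convexity and coercivity via a lower bound obtained by setting $\mu=\mu_X$---and then deduce (b) from (a) together with a lower semicontinuity step and standard convex analysis on $\mathbb{R}$.

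For (a), I would fix $\mu\in\mathcal{M}(\mathbb{R})$ and observe that $m\mapsto \int_{\mathbb{R}}h(x-m)\mu(dx)-\varphi(d_c(\mu_X,\mu))$ is convex in $m$: indeed, $h(x-m)$ is convex in $m$ for every $x$ because $h$ is convex, the integral preserves convexity, and the penalty term does not depend on $m$. The pointwise supremum over $\mu$ then preserves convexity, giving convexity of $\mathcal{E}_{\varphi}(h,X,\cdot)$. For the coercivity, inserting $\mu=\mu_X$ into the supremum and using $\varphi(0)=0$ yields
\begin{equation*}
\mathcal{E}_{\varphi}(h,X,m)\geq E_P\bigl[h(X-m)\bigr]\geq \alpha E_P\bigl[l_1\bigl((X-m)^+\bigr)\bigr].
\end{equation*}
As $m\to-\infty$ we have $(X-m)^+\uparrow+\infty$ pointwise $P$-a.s., and any non-constant convex increasing function $l_1$ on $[0,+\infty)$ satisfies $l_1(t)\to+\infty$; monotone convergence then forces $E_P[l_1((X-m)^+)]\to+\infty$. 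Symmetrically, the $(1-\alpha)l_2((X-m)^-)$ component drives the limit as $m\to+\infty$.

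For (b), the crucial additional ingredient is lower semicontinuity of $f(m):=\mathcal{E}_{\varphi}(h,X,m)$. For each $\mu$, convexity of $h$ on $\mathbb{R}$ gives continuity of $h$, and $h\geq 0$ because $l_1,l_2\geq 0$; Fatou's lemma applied along any sequence $m_n\to m$ therefore delivers
\begin{equation*}
\liminf_{n\to\infty}\int_{\mathbb{R}}h(x-m_n)\mu(dx)\geq \int_{\mathbb{R}}h(x-m)\mu(dx),
\end{equation*}
so $F_\mu(m):=\int_{\mathbb{R}}h(x-m)\mu(dx)-\varphi(d_c(\mu_X,\mu))$ is lower semicontinuous in $m$, and $f=\sup_\mu F_\mu$ inherits lower semicontinuity. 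Under the standing assumption $f(0)<+\infty$, $f$ is a proper convex lsc function that is coercive by (a); any minimizing sequence is therefore bounded, extracts a convergent subsequence with some limit $m^*$, and lsc gives $f(m^*)\leq \liminf_n f(m_n)=\inf f$, so $\operatorname{argmin} f\neq\emptyset$. Since $\operatorname{argmin} f=\{f\leq\inf f\}$ is closed (by lsc), convex (by convexity of $f$), and bounded (by coercivity), it must be a closed bounded interval $[m_1,m_2]$, which may degenerate to a single point.

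The main subtlety is the lsc step: one needs Fatou to apply for every $\mu$, which works precisely because the structural form \eqref{eq:2.2-1} of $h$ guarantees $h\geq 0$ without any integrability hypothesis on $\mu_X$. Everything else is routine convex analysis on $\mathbb{R}$, and the coercivity bound from (a) is what turns the existence argument into a statement about a bounded interval of minimizers.
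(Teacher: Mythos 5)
Your proof is correct, and for part (a) it follows the paper's route exactly: convexity comes from the fact that the supremum over $\mu$ of functions convex in $m$ is convex, and coercivity comes from bounding $\mathcal{E}_{\varphi}(h,X,m)$ below by $E_P[h(X-m)]$ (taking $\mu=\mu_X$ and using $\varphi(0)=0$) and then applying monotone convergence. (Both you and the paper implicitly need $l_1,l_2$ to be non-constant for the limits to be $+\infty$; you at least flag this.) The genuine difference is in part (b). The paper simply asserts that convexity plus coercivity plus $\mathcal{E}_{\varphi}(h,X,0)<+\infty$ make it ``easy to find'' a closed interval of minimizers, but that is not quite enough: a proper convex coercive function on $\mathbb{R}$ can fail to attain its infimum (or have a non-closed argmin) if it is not lower semicontinuous at the boundary of its effective domain. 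Your Fatou argument --- using that $h\geq 0$ and is continuous, so each $F_\mu$ is lsc in $m$ and hence so is the supremum --- supplies exactly the missing ingredient, after which the standard facts (the sublevel set $\{f\leq\inf f\}$ is nonempty, closed, convex and bounded) give the closed interval. So your write-up is not only consistent with the paper's proof but actually closes a small gap in it.
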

Compared with generalized quantiles investigated by \cite{BKMR14}, robust generalized quantiles consider the uncertainty distributions of future losses, which also leads to an infinite dimension problem of calculation. Fortunately, we can use the dual formula obtained by \cite{BDT20} to transform it into a finite dimensional convex function to solve the extremum problem.  More specifically,  since  $h(\cdot)$, defined in \eqref{eq:2.2-1}, is a loss function,  and its $\lambda c$-transform can be written as follows
$$
h^{\lambda c}(x)=\sup \{h(y)-\lambda |x-y|^p: y \in \mathbb{R} \text { such that } h(y)<\infty\}, ~~x\in\mathbb{R}.
$$
Using the dual formula (Lemma \ref{lem:4-1}),  we can obtain
\begin{align}\label{eq:2.2-2}
\mathcal{E}_{\varphi}(h,X,m)=\inf _{\lambda \geq 0}\left\{E_{P}[h^{\lambda c}(X-m)]+\varphi^*(\lambda)\right\}.
\end{align}

To avoid of $h^{\lambda c}(\cdot)\equiv +\infty$,  the following lemma provides a sufficient condition by controlling the growth rate of the loss function.

\begin{lemma}\label{lem:2-1}
For any loss function $h:\mathbb{R}\to \mathbb{R}$, suppose that there exists a constant $C\geq 0$ such that for all $x\in \mathbb{R}$, $h(x)\leq C(1+|x|^p)$, where $p\geq 1$ is the power order for the cost function $c(\cdot,\cdot)$.  Then, there exists a constant $\lambda^* > C$, such that  $h^{\lambda^* c}(x)<+\infty$ for all $x \in\mathbb{R}$.
\end{lemma}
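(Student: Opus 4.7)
The plan is to apply the growth hypothesis directly inside the $\lambda c$-transform and then use a Young-type inequality to absorb the $|y|^p$ term into $\lambda|x-y|^p$ plus a remainder that depends only on $x$. First I would observe that $h(y)\le C(1+|y|^p)$ forces $h$ to be finite at every point, so the constraint $h(y)<\infty$ in the definition of $h^{\lambda c}(x)$ is vacuous and the supremum runs over all of $\mathbb{R}$. Substituting the growth bound gives
$$h(y)-\lambda|x-y|^p\le C+C|y|^p-\lambda|x-y|^p.$$

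Next I would invoke the Young-type inequality: for every $\epsilon>0$ there exists a constant $K_\epsilon>0$, depending only on $\epsilon$ and $p$, such that $(a+b)^p\le(1+\epsilon)b^p+K_\epsilon a^p$ for all $a,b\ge 0$. For $p=1$ this holds trivially with $K_\epsilon=1$; for $p>1$ it follows from a one-variable optimization of $b\mapsto(a+b)^p-(1+\epsilon)b^p$, which attains a finite maximum of order $a^p$. Combined with the triangle inequality $|y|\le|x|+|x-y|$, this yields $|y|^p\le(1+\epsilon)|x-y|^p+K_\epsilon|x|^p$, and substituting produces
$$h(y)-\lambda|x-y|^p\le C(1+K_\epsilon|x|^p)+\bigl(C(1+\epsilon)-\lambda\bigr)|x-y|^p.$$

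To conclude, I would pick any $\lambda^*>C$ and then choose $\epsilon\in(0,\lambda^*/C-1]$ so that $C(1+\epsilon)\le\lambda^*$; the coefficient of $|x-y|^p$ becomes nonpositive, the right-hand side is bounded uniformly in $y$ by $C(1+K_\epsilon|x|^p)$, and hence
$$h^{\lambda^* c}(x)\le C(1+K_\epsilon|x|^p)<+\infty\quad\text{for every }x\in\mathbb{R}.$$
The only step that is not entirely mechanical is establishing the Young-type inequality for $p>1$; I expect this to be the main, albeit minor, obstacle, since a clean proof of it shows that in fact every $\lambda^*>C$ is admissible, matching the statement of the lemma precisely.
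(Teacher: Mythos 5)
Your proposal is correct and follows essentially the same route as the paper: both start by replacing $h(y)$ with the growth bound $C(1+|y|^p)$ inside the supremum defining $h^{\lambda c}(x)$. The paper then simply asserts the finiteness as ``obvious,'' whereas you supply the missing justification via the inequality $|y|^p\le(1+\epsilon)|x-y|^p+K_\epsilon|x|^p$ and the choice $C(1+\epsilon)\le\lambda^*$, which in fact shows that every $\lambda^*>C$ is admissible.
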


In particular, when $l_1(x)=l_2(x)=x$, then the corresponding robust generalized quantiles is called  \textit{robust VaR}.  Based on Lemma \ref{lem:2-1},  we should choose the cost function with $p\geq 1$.  We find that the robust VaR degenerates into VaR for any penalization function $\varphi$,  when the cost function $c(x,y)=|x-y|$,  which means that  the VaR itself has robustness.
\begin{proposition}\label{prop:2-5}
Suppose the cost function is $c(x,y)=|x-y|$, the loss functions $l_1(x)=l_2(x)=x$ , $h$ is defined by \eqref{eq:2.2-1} for each $\alpha\in(0,1)$,   and $X\in L^{1}(\Omega, \mathcal{F}, P)$.   Then, for any  penalization function $\varphi$ and for any $\alpha\in(0,1)$,   the robust VaR  can be degenerated into the classical VaR,  i.e.,  $q_{\alpha,\varphi}(X)=VaR_{\alpha}(X)$.
\end{proposition}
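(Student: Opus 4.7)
The plan is to apply the duality formula of Lemma \ref{lem:4-1} underlying equation \eqref{eq:2.2-2},
\[\mathcal{E}_{\varphi}(h,X,m)=\inf_{\lambda\ge 0}\left\{E_{P}[h^{\lambda c}(X-m)]+\varphi^{*}(\lambda)\right\},\]
which turns the supremum over ambiguous distributions into a scalar infimum in $\lambda$ involving the $\lambda c$-transform $h^{\lambda c}$. For the specific $h(y)=\alpha y^{+}+(1-\alpha)y^{-}$ and cost $c(x,y)=|x-y|$, I will show that $h^{\lambda c}$ is either identically $+\infty$ or coincides with $h$, so that $\mathcal{E}_{\varphi}(h,X,m)$ equals $E_{P}[h(X-m)]$ plus a constant independent of $m$. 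Minimizing in $m$ then reduces the problem to the classical Koenker--Bassett minimization recalled in the introduction, whose argmin is $VaR_{\alpha}(X)$.

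The first and main step is to compute $h^{\lambda c}(x)=\sup_{y\in\mathbb{R}}\{h(y)-\lambda|x-y|\}$. The key observation is that $h$ is globally Lipschitz with best constant $L:=\max(\alpha,1-\alpha)$, because its slope is $\alpha$ on $(0,\infty)$ and $-(1-\alpha)$ on $(-\infty,0)$. Hence for $\lambda\ge L$ one has $h(y)\le h(x)+\lambda|y-x|$ for every $y$, with equality at $y=x$, so $h^{\lambda c}(x)=h(x)$. Conversely, if $\lambda<L$ then either $\lambda<\alpha$, in which case letting $y\to+\infty$ gives $h(y)-\lambda|x-y|=(\alpha-\lambda)y+\lambda x\to+\infty$, or $\lambda<1-\alpha$, in which case letting $y\to-\infty$ gives $h(y)-\lambda|x-y|=(\lambda-(1-\alpha))y-\lambda x\to+\infty$; either way $h^{\lambda c}\equiv+\infty$.

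Substituting this dichotomy into the dual formula and using the elementary fact that $\varphi^{*}$ is non-decreasing on $[0,+\infty)$ (since $\varphi^{*}(y)=\sup_{x\ge 0}(xy-\varphi(x))$ and $x\ge 0$), the infimum is attained at $\lambda=L$, yielding
\[\mathcal{E}_{\varphi}(h,X,m)=E_{P}[h(X-m)]+\varphi^{*}(L),\]
where the constant $\varphi^{*}(L)$ does not depend on $m$ and $E_{P}[h(X-m)]$ is finite because $X\in L^{1}$ and $h$ is Lipschitz. Therefore
\[\arg\min_{m\in\mathbb{R}}\mathcal{E}_{\varphi}(h,X,m)=\arg\min_{m\in\mathbb{R}}\left\{\alpha E_{P}[(X-m)^{+}]+(1-\alpha)E_{P}[(X-m)^{-}]\right\}=VaR_{\alpha}(X),\]
the last identity being the Koenker--Bassett characterization. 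The main obstacle is the Lipschitz-collapse argument in Step 1: it is precisely the combination of $h$ being $L$-Lipschitz and the cost $|x-y|$ growing only linearly that forces $h^{\lambda c}=h$ on the admissible range of $\lambda$. This is also the reason the robustness evaporates uniformly in $\varphi$, and is specific to the choice $p=1$; for $p>1$ the cost $\lambda|x-y|^{p}$ would be super-linear away from $y=x$ and the $\lambda c$-transform would no longer collapse to $h$.
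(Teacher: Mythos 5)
Your proposal is correct and follows essentially the same route as the paper: compute the $\lambda c$-transform of $h$ (which is $+\infty$ for $\lambda<\max\{\alpha,1-\alpha\}$ and equals $h$ otherwise), use monotonicity of $\varphi^{*}$ to pin the dual infimum at $\lambda=\max\{\alpha,1-\alpha\}$, and reduce to the Koenker--Bassett characterization of $VaR_{\alpha}$. Your explicit Lipschitz argument for the transform is a nice elaboration of what the paper dismisses as ``directly calculations,'' but the proof is the same.
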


\section{Robust expectiles }\label{sec3}

Suppose $X\in L^{2}(\Omega, \mathcal{F},P)$. Choosing $l_1(x)=l_2(x)=x^2$, for each
 $\alpha \in (0,1)$, then
\begin{align}\label{eq:3-1}
h(x)=\alpha (x^+)^{2}+(1-\alpha) (x^-)^{2}, ~~\forall x\in \mathbb{R}.
\end{align}
$h$ is a convex loss function.  \cite{NP87} have considered the expectiles for random variables.  \cite{BKMR14} establish the expectiles with the relationship for the risk measures.

This section, we will propose two kinds of robust generalized expectiles for random variables with uncertainty distributions by introducing the following two specific penalization functions $\varphi_{1}$ and $\varphi_{2}$:
\begin{itemize}
\item  $\varphi_1(x)=\delta_1 x$ with $\delta_1>max\{\alpha,1-\alpha\}$;
\item  $\varphi_2(x)=\infty I_{(\delta_2,+\infty)}(x)$ with  $\delta_{2}>0$.
\end{itemize}
By using dual formula,  we transform the problem of solving for the robust generalized quantiles into the minimum problem of finite dimensions,  so as to further study their properties as risk measures and the impact of penalization parameters on them.

\subsection{Robust expectiles with $\varphi_1$}
This subsection considers the first specific penalization function $\varphi_1(x)=\delta_1 x$,  $x\geq 0$,  with $\delta_1>max\{\alpha,1-\alpha\}$, which grows linearly with respect to the distance, and it is named by the robust expectiles with $\varphi_1$.

\begin{definition}
Suppose $X\in L^{2}(\Omega, \mathcal{F},P)$.  For any $\alpha\in(0,1)$, let $h$ be the loss function defined by \eqref{eq:3-1}. The cost function $c(x,y)=|x-y|^2$. Then the robust expectiles with penalization function $\varphi_1$ are defined by:
\begin{align*}
e_{\alpha,\varphi_1}(X):=\underset{m \in \mathbb{R}}{\operatorname{argmin}}~\mathcal{E}_{\varphi_1}(h,X,m).
\end{align*}
\end{definition}

\begin{remark}
Lemma \ref{lem:2-1} explains why we choose $c(x,y)=|x-y|^2$ when we define the robust expectiles with $\varphi_1$. Using the dual formula, we can easily obtain
if the cost function $c(x,y)=|x-y|$, and we find that $h^{\lambda c}(x)\equiv +\infty$, which leads to $\mathcal{E}_{\varphi_1}(h,X,m)\equiv +\infty$ with $m\in \mathbb{R}$. It is a meaningless question.  Therefore, for the loss function with $h$, we should choose the cost function with $p\geq2$.
\end{remark}

The following Proposition \ref{prop:3-1} gives a representation of the robust expectiles with $\varphi_1$. Compared to its definition, this characterization is very straightforward. By means of the duality theorem,  the influence of its uncertain distribution is described by the penalty parameter $\delta_{1}$ and its prior distribution under $P$.

\begin{proposition}\label{prop:3-1}
Let $\alpha\in (0,1)$, and $e_{\alpha,\varphi_1}(X)$ be the robust expectiles with $\varphi_1$ of $X$, Then
\begin{align*}
e_{\alpha,\varphi_1}(X)=\underset{m \in \mathbb{R}}{\operatorname{argmin}} ~g_1(X,m,\delta_{1},\alpha),
\end{align*}
where
$$g_1(X,m,\delta_{1},\alpha)=\frac{\alpha\delta_1}{\delta_1-\alpha}E_P\left[((X-m)^{+})^2\right]+\frac{(1-\alpha)\delta_1}{\delta_1-(1-\alpha)}E_P\left[((X-m)^{-})^2\right].$$
\end{proposition}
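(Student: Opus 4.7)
The plan is to apply the dual formula~\eqref{eq:2.2-2} (Lemma~\ref{lem:4-1}) to the specific $h$ and $\varphi_1$ at hand, and then show that the inner infimum over the dual variable $\lambda$ is attained at the upper boundary $\lambda=\delta_1$. The identity $e_{\alpha,\varphi_1}(X)=\arg\min_m g_1(X,m,\delta_1,\alpha)$ then drops out immediately from the definition.

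First I would compute the convex conjugate of $\varphi_1$. Since $\varphi_1(x)=\delta_1 x$ for $x\ge 0$, a routine calculation gives $\varphi_1^*(\lambda)=0$ for $0\le \lambda\le \delta_1$ and $\varphi_1^*(\lambda)=+\infty$ otherwise. Hence the dual formula restricts the infimum to $\lambda\in[0,\delta_1]$.

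Second I would work out the $\lambda c$-transform $h^{\lambda c}(x)=\sup_{y\in\mathbb{R}}\{\alpha(y^+)^2+(1-\alpha)(y^-)^2-\lambda(x-y)^2\}$. The argument splits into the two half-lines $y\ge 0$ and $y\le 0$: on each half-line the objective is a quadratic in $y$, and its supremum is finite only when $\lambda$ dominates the leading coefficient, so we must restrict to $\lambda>\max\{\alpha,1-\alpha\}$ (which is within $[0,\delta_1]$ by the standing assumption $\delta_1>\max\{\alpha,1-\alpha\}$). For such $\lambda$, solving the first-order condition on each branch and comparing with the boundary value $y=0$ yields, after elementary simplification,
\begin{equation*}
h^{\lambda c}(x)=\frac{\alpha\lambda}{\lambda-\alpha}(x^+)^2+\frac{(1-\alpha)\lambda}{\lambda-(1-\alpha)}(x^-)^2.
\end{equation*}
For $\lambda\le\max\{\alpha,1-\alpha\}$ the transform is infinite on a half-line, so that range of $\lambda$ cannot attain the infimum (unless the relevant expectation is zero, a harmless degenerate case).

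Third, I would plug into~\eqref{eq:2.2-2} and rewrite
\begin{equation*}
\mathcal{E}_{\varphi_1}(h,X,m)=\inf_{\max\{\alpha,1-\alpha\}<\lambda\le\delta_1}\Bigl\{\tfrac{\alpha\lambda}{\lambda-\alpha}E_P[((X-m)^+)^2]+\tfrac{(1-\alpha)\lambda}{\lambda-(1-\alpha)}E_P[((X-m)^-)^2]\Bigr\}.
\end{equation*}
A direct differentiation in $\lambda$ shows that both coefficient functions $\lambda\mapsto \alpha\lambda/(\lambda-\alpha)$ and $\lambda\mapsto(1-\alpha)\lambda/(\lambda-(1-\alpha))$ are strictly decreasing on their feasible intervals. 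Since both expectations are non-negative, the whole expression is non-increasing in $\lambda$, and the infimum is attained at $\lambda=\delta_1$. This gives $\mathcal{E}_{\varphi_1}(h,X,m)=g_1(X,m,\delta_1,\alpha)$, and taking $\arg\min_m$ of both sides finishes the proof.

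The only mildly delicate step is the second one, the piecewise computation of $h^{\lambda c}$: one must handle the boundary case $y=0$ and check that for $x\ge 0$ the supremum is indeed attained on the branch $y\ge 0$ (and symmetrically for $x\le 0$), which requires comparing the interior critical value with the value $-\lambda x^2$ at $y=0$ on the opposite branch. Everything else is a straightforward application of Lemma~\ref{lem:4-1} together with monotonicity in $\lambda$.
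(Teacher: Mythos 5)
Your proposal is correct and follows essentially the same route as the paper's proof: compute $\varphi_1^*(\lambda)=\infty I_{(\delta_1,+\infty)}(\lambda)$, evaluate the $\lambda c$-transform $h^{\lambda c}$ piecewise to obtain $\frac{\alpha\lambda}{\lambda-\alpha}(x^+)^2+\frac{(1-\alpha)\lambda}{\lambda-(1-\alpha)}(x^-)^2$ for $\lambda>\max\{\alpha,1-\alpha\}$, apply the dual formula of Lemma~\ref{lem:4-1}, and use the monotone decrease in $\lambda$ to place the infimum at $\lambda=\delta_1$. The paper spells out the transform in three cases according to the sign of $\alpha-\tfrac12$, but the substance is identical to your argument.
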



Similar to the expectiles in the classical situation, we can also establish the  relationship between robust expectiles with penalization function $\varphi_1$ and  risk measures.  The following two theorems prove the robust expectile with $\varphi_1$ is a coherent risk measure and give its representation theorem.

\begin{theorem}\label{th:3-1}
Suppose that $\alpha \in (\frac{1}{2},1)$.  Then,  the robust expectile  $e_{\alpha,\varphi_1}(\cdot)$ is a coherent risk measure on $L^{2}(\Omega,\mathcal{F},P)$.
\end{theorem}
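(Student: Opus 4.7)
The plan is to reduce $e_{\alpha,\varphi_1}(\cdot)$ to a classical expectile and then invoke the coherence result of \cite{BKMR14}. Starting from the representation in Proposition \ref{prop:3-1}, I write the objective as $g_1(X,m,\delta_1,\alpha)=A\,E_P[((X-m)^+)^2]+B\,E_P[((X-m)^-)^2]$, where $A:=\alpha\delta_1/(\delta_1-\alpha)$ and $B:=(1-\alpha)\delta_1/(\delta_1-(1-\alpha))$. Since $\delta_1>\max\{\alpha,1-\alpha\}=\alpha$ (as $\alpha>1/2$), both coefficients are strictly positive, so the $\arg\min$ is unchanged under division by $A+B>0$. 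Setting $\alpha':=A/(A+B)\in(0,1)$, one obtains
$$e_{\alpha,\varphi_1}(X)=\underset{m\in\mathbb{R}}{\arg\min}\bigl\{\alpha'\,E_P[((X-m)^+)^2]+(1-\alpha')\,E_P[((X-m)^-)^2]\bigr\},$$
which is precisely the classical expectile $e_{\alpha'}(X)$ in the sense of \cite{NP87} and \cite{BKMR14}.

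The crucial verification is that $\alpha'>1/2$ whenever $\alpha>1/2$. Clearing the (positive) denominators, $A>B$ is equivalent to $\alpha(\delta_1-(1-\alpha))>(1-\alpha)(\delta_1-\alpha)$, which after expanding cancels the cross term $\alpha(1-\alpha)$ on both sides and reduces to $\alpha\delta_1>(1-\alpha)\delta_1$, i.e.\ $\alpha>1/2$. Hence $\alpha'>1/2$, and the penalty level $\delta_1$ only reparametrizes the effective expectile level.

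Once this identification is in hand, coherence of $e_{\alpha,\varphi_1}$ is inherited directly from the coherence of $e_{\alpha'}$ at level $\alpha'>1/2$ established in \cite{BKMR14}: translation invariance and positive homogeneity follow by a change of variables $m\mapsto m+c$ and $m\mapsto \lambda m$ in the argmin, monotonicity follows from Proposition \ref{prop:2-1}(c) applied to the underlying $\mathcal{OCE}$ formulation, and subadditivity is the nontrivial content that \cite{BKMR14} secures by a convex-analytic dual characterization of expectiles. The main obstacle is therefore conceptual rather than computational: recognizing that after rescaling, the robustified objective is structurally a classical quadratic expectile problem whose tilted weights still put more mass on the upper tail.
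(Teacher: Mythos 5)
Your proof is correct, but it takes a genuinely different route from the paper's. You normalize the two weights $A=\alpha\delta_1/(\delta_1-\alpha)$ and $B=(1-\alpha)\delta_1/(\delta_1-(1-\alpha))$ to recognize $e_{\alpha,\varphi_1}(X)$ as the classical expectile $e_{\alpha'}(X)$ at the shifted level $\alpha'=A/(A+B)$, verify that $\alpha'>1/2$ exactly when $\alpha>1/2$ (your cancellation of the cross term $\alpha(1-\alpha)$ is right), and then inherit coherence wholesale from \cite{BKMR14}. The paper instead proves coherence from scratch: it works with the first-order condition $g_1'(X,e_{\alpha,\varphi_1}(X),\delta_1,\alpha)=0$, i.e. $E_P[\psi(X-e_{\alpha,\varphi_1}(X))]=0$ with $\psi(x)=\frac{2\alpha\delta_1}{\delta_1-\alpha}x^+-\frac{2(1-\alpha)\delta_1}{\delta_1-(1-\alpha)}x^-$, and checks translation invariance, monotonicity, convexity and positive homogeneity one by one using monotonicity of $g_1'$ in $m$ and $X$ and concavity of $g_1'$ in $(X,m)$ (which requires precisely your inequality $A\geq B$). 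Your reduction is shorter and makes the role of $\delta_1$ transparent (it is a pure reparametrization of the expectile level, which also explains the paper's later numerical observation that $e_{\alpha,\varphi_1}$ interpolates toward the classical expectile as $\delta_1$ grows); the paper's direct verification is self-contained and sets up the shortfall-risk-measure viewpoint it needs for the dual representation in Theorem \ref{th:3-2}. One small inaccuracy in your last paragraph: monotonicity of the minimizer does not follow from Proposition \ref{prop:2-1}(c), which concerns the optimal value $\mathcal{OCE}_{l,\varphi}$ rather than the argmin of $\mathcal{E}_{\varphi}(h,X,\cdot)$; but this is harmless, since once the identification $e_{\alpha,\varphi_1}=e_{\alpha'}$ is established all four axioms, monotonicity included, come directly from the coherence of expectiles at level $\alpha'>1/2$.
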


In the proof procedure of Theorem \ref{th:3-1}, for any $X$ in $L^{2}(\Omega,\mathcal{F},P)$, $\alpha\in(0,1)$ and $\delta_1>max\{\alpha,1-\alpha\}$,
we know that  $g_1'(X,  e_{\alpha,\varphi_1}(X), \delta_{1},\alpha)=0$.
Let  $\psi:\mathbb{R} \to \mathbb{R}$ be defined by
$$\psi(x):=\frac{2\alpha\delta_1}{\delta_1-\alpha} x^{+}-\frac{2(1-\alpha)\delta_1}{\delta_1-(1-\alpha)}x^{-},  ~~x\in\mathbb{R},$$
which is increasing and convex with respect to $x$ when $\alpha$ in $(\frac{1}{2},1)$.
Then, in this situation, $e_{\alpha,\varphi_1}(X)$ also satisfies
$$E_P\left[\psi(X- e_{\alpha,\varphi_1}(X))\right]=0.$$
Hence, when $\alpha \in (\frac{1}{2},1)$, the robust expectile can be regarded as a special case of shortfall risk measure which is given by
$$\rho(X)=\inf \{m\in \mathbb{R} \hspace{0.2em} | \hspace{0.2em} E_P[l(X-m)]\leq x_0 \},$$where $l$ is a loss function and $x_{0}$ is a ceiling for expected loss.  Then by the representation theorem for expected shortfall risk measure (see, \cite{FS02} or \cite{FS16}), we can obtain the  representation theorem for the robust expectile with $\varphi_1$.

\begin{theorem}\label{th:3-2}
For any $X\in L^{\infty}(\Omega,\mathcal{F},P)$, $\alpha\in(0,1)$ and $\delta_1>max\{\alpha,1-\alpha\}$,  the robust expectile $e_{\alpha,\varphi_1}(X)$ has the following dual representation
$$
e_{\alpha,\varphi_1}(X)= \begin{cases}
\underset{\mu \in \mathcal{M}_1(P)}{\operatorname{max}} E_{Q}[X], & \text { if } \alpha \in (\frac{1}{2},1), \\
\underset{\mu \in \mathcal{M}_2(P)}{\operatorname{min}} E_{Q}[X], & \text { if } \alpha \in (0,\frac{1}{2}), \\
\end{cases}
$$
where
$$
\mathcal{M}_1(P)=\left\{Q: \begin{array}{l}
Q\hspace{0.3em} is\hspace{0.3em} absolutely\hspace{0.3em} continuous\hspace{0.3em} with\hspace{0.3em} respect\hspace{0.3em} to\hspace{0.3em} P, \\
and \hspace{0.3em} \exists\hspace{0.2em} t_0 >0, \hspace{0.3em} such \hspace{0.3em} that \hspace{0.3em} \frac{2(1-\alpha)\delta_1}{\delta_1-(1-\alpha)}\leq t_0\frac{\mathrm{d}Q}{\mathrm{d}P} \leq \frac{2\alpha\delta_1}{\delta_1-\alpha}.
\end{array}\right\},
$$
and
$$
\mathcal{M}_2(P)=\left\{Q: \begin{array}{l}
Q\hspace{0.3em} is\hspace{0.3em} absolutely\hspace{0.3em} continuous\hspace{0.3em} with\hspace{0.3em} respect\hspace{0.3em} to\hspace{0.3em} P, \\
and \hspace{0.3em} \exists\hspace{0.2em} t_0 >0, \hspace{0.3em} such \hspace{0.3em} that \hspace{0.3em} \frac{2\alpha\delta_1}{\delta_1-\alpha}\leq t_0\frac{\mathrm{d}Q}{\mathrm{d}P} \leq \frac{2(1-\alpha)\delta_1}{\delta_1-(1-\alpha)}
\end{array}\right\}.
$$
\end{theorem}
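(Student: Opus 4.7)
Following the roadmap outlined just before the theorem, my plan is to identify $e_{\alpha,\varphi_1}(X)$ with a shortfall risk measure generated by the piecewise-linear loss $\psi$, and then invoke the dual representation theorem for shortfall risk measures (in the style of \cite{FS02,FS16}), identifying the dual set via the Legendre transform of $\psi$. The argument splits according to whether $\alpha>1/2$ or $\alpha<1/2$; the latter case will reduce to the former through the symmetry $e_{\alpha,\varphi_1}(X)=-e_{1-\alpha,\varphi_1}(-X)$, so the substantive work lies in the case $\alpha>1/2$.

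For $\alpha\in(1/2,1)$, I first differentiate the objective $g_1(X,\cdot,\delta_1,\alpha)$ from Proposition \ref{prop:3-1} in $m$ and set the derivative to zero to recover $E_P[\psi(X-e_{\alpha,\varphi_1}(X))]=0$. Writing $a=\frac{2\alpha\delta_1}{\delta_1-\alpha}$ and $b=\frac{2(1-\alpha)\delta_1}{\delta_1-(1-\alpha)}$, the function $\psi(x)=ax^+-bx^-$ has slopes $b,a>0$, so $m\mapsto E_P[\psi(X-m)]$ is strictly decreasing and the root is unique. Consequently
\begin{equation*}
e_{\alpha,\varphi_1}(X)=\inf\{m\in\mathbb{R}:E_P[\psi(X-m)]\leq 0\},
\end{equation*}
which is exactly a shortfall risk measure with loss function $\psi$ at ceiling $x_0=0$. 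Since $\alpha>1/2$ gives $a>b$, the function $\psi$ is convex, increasing, and positively homogeneous, putting $e_{\alpha,\varphi_1}$ in the class of coherent shortfall risk measures. A direct computation of the Legendre transform yields $\psi^*(y)=0$ on $[b,a]$ and $+\infty$ outside, so the Föllmer--Schied minimal penalty $\alpha_{\min}(Q)$ is $\{0,+\infty\}$-valued: it vanishes precisely on those $Q\ll P$ whose density satisfies $t_0\,\frac{\mathrm{d}Q}{\mathrm{d}P}\in[b,a]$ for some $t_0>0$, i.e., on $\mathcal{M}_1(P)$. The general shortfall representation thus collapses to $e_{\alpha,\varphi_1}(X)=\max_{Q\in\mathcal{M}_1(P)}E_Q[X]$; the max is attained because $X\in L^\infty$ and the admissible density set is weak-$*$ compact in $L^1(P)$.

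For $\alpha\in(0,1/2)$, the substitution $(X,m)\mapsto(-X,-m)$ inside $g_1$ interchanges the positive and negative parts and the coefficients $a,b$, so that $g_1(X,m,\delta_1,\alpha)=g_1(-X,-m,\delta_1,1-\alpha)$, whence $e_{\alpha,\varphi_1}(X)=-e_{1-\alpha,\varphi_1}(-X)$. Since $1-\alpha>1/2$, the previous step applies to $e_{1-\alpha,\varphi_1}(-X)$ with the upper and lower density bounds swapped, which is precisely $\mathcal{M}_2(P)$; the identity $\max_QE_Q[-X]=-\min_QE_Q[X]$ then delivers the claimed formula. The main obstacle is the clean transition from the abstract Föllmer--Schied penalty representation to the explicit density cone $\mathcal{M}_1(P)$: it requires both the computation of $\psi^*$ and, crucially, the observation that positive homogeneity of $\psi$ forces the penalty $\alpha_{\min}$ to be $\{0,+\infty\}$-valued, so that the sup over all $Q\ll P$ collapses to a maximum over $\mathcal{M}_1(P)$. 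Without exploiting homogeneity one would only recover a general convex (rather than coherent) representation with a non-trivial penalty, and the explicit density bounds in the theorem statement would not be visible.
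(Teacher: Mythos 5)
Your proposal follows essentially the same route as the paper: identify $e_{\alpha,\varphi_1}$ with the shortfall risk measure generated by $\psi(x)=\frac{2\alpha\delta_1}{\delta_1-\alpha}x^{+}-\frac{2(1-\alpha)\delta_1}{\delta_1-(1-\alpha)}x^{-}$ via the first-order condition $E_P[\psi(X-e_{\alpha,\varphi_1}(X))]=0$, apply the F\"ollmer--Schied representation for shortfall risk measures, compute $\psi^{*}$ to see the minimal penalty is $\{0,+\infty\}$-valued and thus collapses to the density set $\mathcal{M}_1(P)$, and reduce $\alpha<1/2$ to $\alpha>1/2$ through $e_{\alpha,\varphi_1}(X)=-e_{1-\alpha,\varphi_1}(-X)$. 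The only differences are cosmetic: you supply a few details the paper leaves implicit (uniqueness of the root, attainment of the max), so the proof is correct and matches the paper's argument.
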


Compared with Proposition 8 in \cite{BKMR14}, they also established the  representation theorem for the generalized expectile when the random variable has no uncertainty distribution.   Our Theorem \ref{th:3-2} provides a representation of the robust expectile for the random variable with uncertainty distributions. We can see the penalization parameter $\delta_{1}$  appears explicitly in the set of probability measures $\mathcal{M}_{1}(P)$ or $\mathcal{M}_{2}(P)$.

Next, we will consider the impact of the penalization function on the robust expectiles with $\varphi_1$ under a specific baseline distribution.
For $X\in L^{2}(\Omega,\mathcal{F},P)$, $\alpha\in(0,1)$ and $\delta_1>max\{\alpha,1-\alpha\}$,  we know  there exists a unique $m_1^*$ such that
$g_1'(X,m_{1}^{*}, \delta_{1},\alpha)=0,$ ~ i.e.,
\begin{align*}
-\frac{2\alpha\delta_1}{\delta_1-\alpha}E_P\left[(X-m_{1}^{*})^{+}\right]+\frac{2(1-\alpha)\delta_1}{\delta_1-(1-\alpha)}E_P\left[(X-m_{1}^{*})^{-}\right]=0,
\end{align*}where $e_{\alpha,\varphi_1}(X)=m_1^*$.

Firstly, it is obvious  $m_1^* \equiv E_P[X]$ when $\alpha=\frac{1}{2}$.  We can only consider the situation $\alpha \neq \frac{1}{2}$, because we're going to focus on the effect of the penalization parameter $\delta_{1}$.   Given two random variables $X$ and $Y$, suppose the prior distributions of $X$ and $Y$ are normal distribution and  exponential distribution respectively.
Now, for each $\alpha \in(0,1)$,  we consider the robust expectiles $e_{\alpha,\varphi_{1}}(X)$ and  $e_{\alpha,\varphi_{1}}(Y)$ for any penalization parameter $\delta_{1}>\max\{\alpha, 1-\alpha\}$.

Figure 1 depicts the $e_{\alpha,\varphi_1}(\cdot)$ with different distributions and different distribution parameters. Observing that no matter under normal distribution or exponential distribution, $e_{\alpha,\varphi_1}(X)$ and $e_{\alpha,\varphi_1}(Y)$ have the similar variation tendency. When $\alpha \in (\frac{1}{2},1)$, then $e_{\alpha,\varphi_1}(X)$ and $e_{\alpha,\varphi_1}(Y)$ decrease gradually with the increase of the penalization parameter $\delta_1$, but it is always greater than the mean value of the random losses $X$ or $Y$ respectively.  When $\alpha \in (0,\frac{1}{2})$,  $e_{\alpha,\varphi_1}(X)$ and $e_{\alpha,\varphi_1}(Y)$ are increasing gradually with the increase of penalization parameter $\delta_1$, but are always lower than the mean value of uncertainty losses $X$ or $Y$ respectively.   Moreover, for example, for random loss $X$,  it is not difficult to find that with the increase of $\delta_1$, the change of $e_{\alpha,\varphi_1}(X)$ gradually slows down and tends to be near the mean value of $X$.  This means that those distributions deviated far away from the baseline distribution have less impact on the results.   After all, the baseline distribution is the distribution that the perceptions of financial agents are closer to the true distribution, so the above results are reasonable.
\begin{figure}[H]
\centering
\subfigure[$e_{\alpha,\varphi_1}(X)$ with Normal prior distribution]{
\label{Fig.sub.1}
\includegraphics[width=7.2cm,height = 8cm]{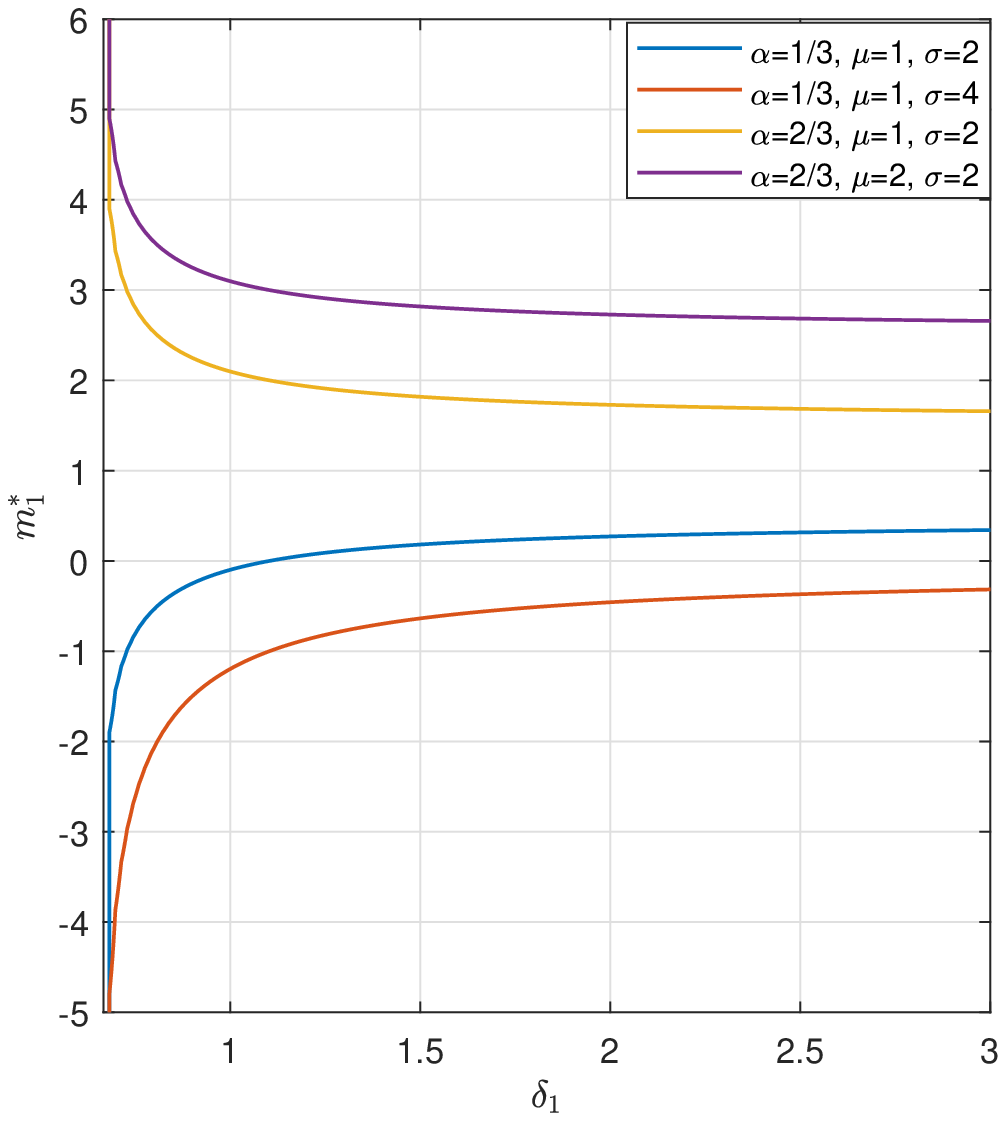}}\subfigure[$e_{\alpha,\varphi_1}(Y)$ with Exponential prior distribution]{
\label{Fig.sub.2}
\includegraphics[width=7.2cm,height =8cm]{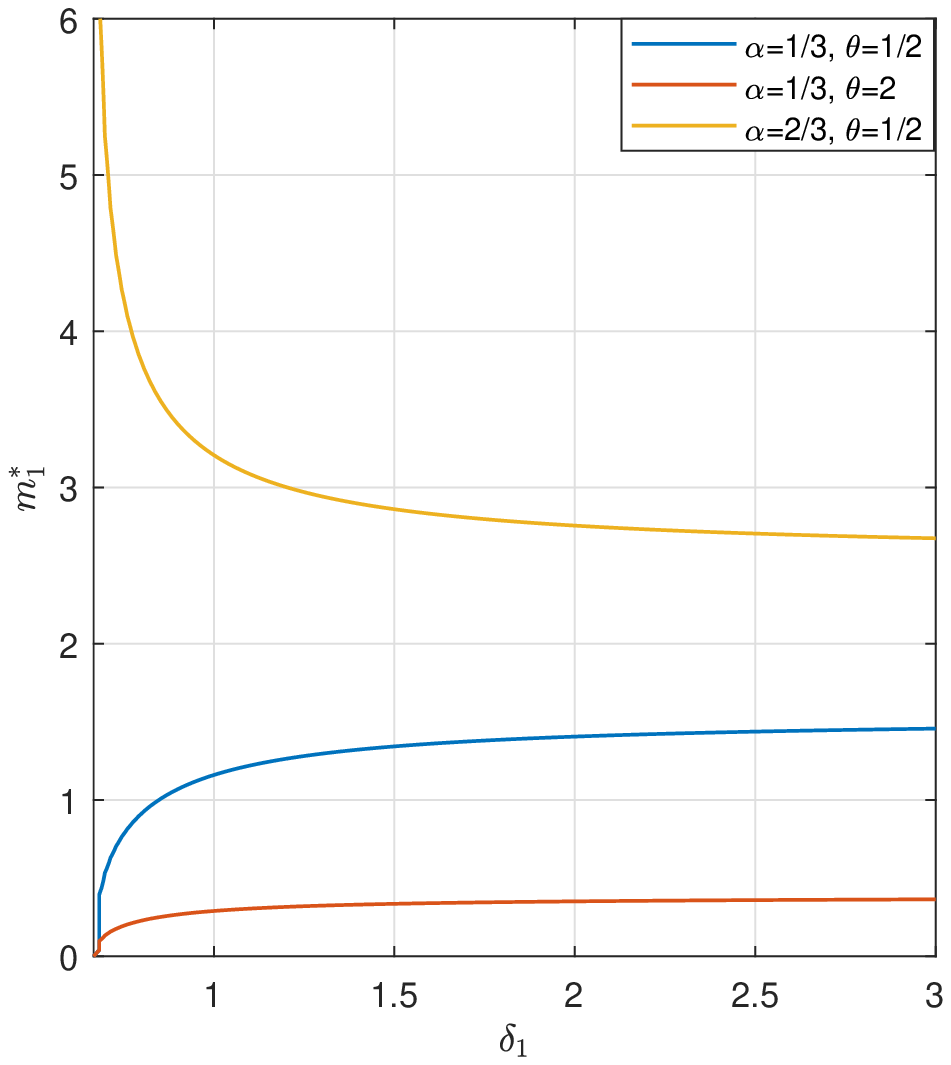}}
\caption{The effect of penalization function $\varphi_1$ on $e_{\alpha,\varphi_1}$ under different distributions}
\label{1}
\end{figure}

\subsection{Robust expectiles with $\varphi_2$}

This subsection considers the second specific penalization function $\varphi_2(x)=\infty I_{(\delta_2,+\infty)}(x)$,  $x\geq 0$,  with $\delta_2>0$.
It is named by the robust expectiles with $\varphi_2$.   For any $\mu$ in $\mathcal{M}(\mathbb{R})$, $\varphi_{2}(d_{c}\left(\mu_{X}, \mu\right))=\infty$ if $d_{c}\left(\mu_{X}, \mu\right)>\delta_{2}$, which means the potential  uncertainty distributions $\mu$ of $X$ should satisfy $d_{c}\left(\mu_{X}, \mu\right)\leq \delta_{2}$.

\begin{definition}
Suppose $X\in L^{2}(\Omega, \mathcal{F},P)$.  For any $\alpha\in(0,1)$, let $h$ be the loss function defined by \eqref{eq:3-1}. The cost function $c(x,y)=|x-y|^2$.    Then the robust expectiles with penalization function $\varphi_2$ are defined by:
\begin{align*}
e_{\alpha,\varphi_2}(X):=\underset{m \in \mathbb{R}}{\operatorname{argmin}}~\mathcal{E}_{\varphi_2}(h,X,m).
\end{align*}
\end{definition}

\begin{proposition}\label{prop:3-2}
Let $\alpha \in (0,1)$, and $e_{\alpha,\varphi_2}(X)$ be the robust expectiles of $X$, Then
\begin{align*}
e_{\alpha,\varphi_2}(X)=\underset{m \in \mathbb{R}}{\operatorname{argmin}}\left\{\inf _{\lambda >max\{\alpha,1-\alpha\} }g_2(X,m,\lambda,\alpha) \right\},
\end{align*}
where
$$g_2(X,m,\lambda,\alpha)=\frac{\alpha\lambda}{\lambda-\alpha}E_P\left[((X-m)^{+})^2\right]+\frac{(1-\alpha)\lambda}{\lambda-(1-\alpha)}E_P\left[((X-m)^{-})^2\right]
+\delta_2\lambda.$$
\end{proposition}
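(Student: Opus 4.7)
The plan is to apply the dual formula (equation \eqref{eq:2.2-2}) from Lemma \ref{lem:4-1}, which converts the infinite-dimensional sup over $\mu\in\mathcal{M}(\mathbb{R})$ into a one-dimensional infimum over $\lambda\geq0$. The two ingredients I need are the convex conjugate $\varphi_2^{\ast}$ and the $\lambda c$-transform $h^{\lambda c}$; then identifying the integrand with $g_2$ will be a direct calculation.

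First I would compute $\varphi_2^{\ast}$. Since $\varphi_2(x)=0$ on $[0,\delta_2]$ and $+\infty$ on $(\delta_2,+\infty)$, the definition $\varphi_2^{\ast}(\lambda)=\sup_{x\geq 0}(\lambda x-\varphi_2(x))$ gives $\varphi_2^{\ast}(\lambda)=\delta_2\lambda$ for every $\lambda\geq 0$. Next I would compute $h^{\lambda c}(x)=\sup_{y\in\mathbb{R}}\{\alpha(y^+)^2+(1-\alpha)(y^-)^2-\lambda(x-y)^2\}$. Splitting into the two cases $y\geq 0$ and $y\leq 0$, each piece is a quadratic in $y$, so one can maximize by differentiation. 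For $x\geq 0$ the optimal nonnegative $y^{\ast}=\frac{\lambda x}{\lambda-\alpha}$ (valid when $\lambda>\alpha$), producing the value $\frac{\alpha\lambda}{\lambda-\alpha}x^2$; for $x\leq 0$ the analogous computation with the $(1-\alpha)$ branch gives $\frac{(1-\alpha)\lambda}{\lambda-(1-\alpha)}x^2$ (valid when $\lambda>1-\alpha$). Combining the two regimes,
\begin{equation*}
h^{\lambda c}(x)=\frac{\alpha\lambda}{\lambda-\alpha}(x^+)^2+\frac{(1-\alpha)\lambda}{\lambda-(1-\alpha)}(x^-)^2,\qquad \lambda>\max\{\alpha,1-\alpha\},
\end{equation*}
while for $\lambda\leq\max\{\alpha,1-\alpha\}$ the supremum is $+\infty$ because one of the quadratic branches opens upward (or is flat) and is unbounded above.

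Plugging $\varphi_2^{\ast}(\lambda)=\delta_2\lambda$ and the explicit formula for $h^{\lambda c}$ into \eqref{eq:2.2-2}, the infimum in $\lambda$ effectively runs only over $\lambda>\max\{\alpha,1-\alpha\}$ (elsewhere the expression is $+\infty$), and the integrand is exactly $g_2(X,m,\lambda,\alpha)$. Hence
\begin{equation*}
\mathcal{E}_{\varphi_2}(h,X,m)=\inf_{\lambda>\max\{\alpha,1-\alpha\}}g_2(X,m,\lambda,\alpha),
\end{equation*}
and taking $\operatorname{argmin}$ in $m$ yields the claimed representation of $e_{\alpha,\varphi_2}(X)$.

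The main obstacle is the explicit evaluation of $h^{\lambda c}$: one must carefully patch the two quadratic branches and identify the precise threshold $\lambda>\max\{\alpha,1-\alpha\}$ beyond which the sup is finite, because a careless estimate on either side would give the wrong (or infinite) value and obscure the appearance of the coefficients $\frac{\alpha\lambda}{\lambda-\alpha}$ and $\frac{(1-\alpha)\lambda}{\lambda-(1-\alpha)}$ that define $g_2$. Everything else is substitution into the dual formula already justified earlier in the paper.
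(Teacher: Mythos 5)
Your proposal is correct and follows essentially the same route as the paper: compute $\varphi_2^{*}(\lambda)=\delta_2\lambda$ on $\lambda\geq 0$, compute the $\lambda c$-transform $h^{\lambda c}$ (which the paper carries out in the proof of Proposition \ref{prop:3-1} and simply cites here), and substitute into the dual formula \eqref{eq:2.2-2}. The only minor imprecision is your claim that $h^{\lambda c}\equiv+\infty$ at $\lambda=\max\{\alpha,1-\alpha\}$ (it is finite on one half-line there), but this does not affect the infimum over the open interval, and the paper glosses over the same boundary case.
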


Obviously, for the given random variable $X$ and $\alpha$,  $g_2(X,m,\lambda,\alpha)$ is a binary convex function with respect to $(m,\lambda)$. To obtain the optimal solution when it reaches its extreme value, we give the partial derivatives of $g_2(X,m,\lambda,\alpha)$ with respect to $m$ and $\lambda$ respectively,
\begin{align*}
\frac{\partial g_2(X,m,\lambda,\alpha)}{\partial \lambda}=-\frac{\alpha^2}{(\lambda-\alpha)^2}E_P\left[((X-m)^{+})^2\right]-\frac{(1-\alpha)^2}{(\lambda-1+\alpha)^2}E_P\left[((X-m)^{-})^2\right]
+\delta_2,
\end{align*}
\begin{align*}
\frac{\partial g_2(X,m,\lambda,\alpha)}{\partial m}=-\frac{2\alpha\lambda}{\lambda-\alpha}E_P\left[(X-m)^{+}\right]+\frac{2(1-\alpha)\lambda}{\lambda-1+\alpha}E_P\left[(X-m)^{-}\right]
.\hspace{6em}
\end{align*}

It is noting that when $\delta_2=0$, one has that
$$\frac{\partial g_2(X,m,\lambda,\alpha)}{\partial \lambda}\leq 0 ,$$
which implies that $g_2(X,m,\lambda,\alpha)$ is decreasing about $\lambda$. Hence, $g_2(X,m,\lambda,\alpha)$ reaches the minimum value at $\lambda \rightarrow +\infty$, it follows that
\begin{align*}
\inf _{\lambda >max\{\alpha,1-\alpha\}} ~g_2(X,m,\lambda,\alpha)~&=\lim _{\lambda \to +\infty}g_2(X,m,\lambda,\alpha)\\
&=\alpha E_P\left[((X-m)^{+})^2\right]+(1-\alpha)E_P\left[((X-m)^{-})^2\right].
\end{align*}
Hence, the minimization problem reduces to
\begin{align*}
\underset{m \in \mathbb{R}}{\operatorname{argmin}} \Big\{\alpha E_P\left[((X-m)^{+})^2\right]+(1-\alpha)E_P\left[((X-m)^{-})^2\right]\Big\},
\end{align*}
which is exactly the expectiles.
\begin{remark}For $\varphi_2(x)=\infty I_{(\delta_2,+\infty)}(x)$, it means that we consider only those distributions that satisfy
$d_{c}\left(\mu_{X}, \mu\right)\leq \delta_2$.
If we take $\delta_2=0$, it implies that we only consider the distribution of uncertain future losses to be deterministic and be $\mu_{X}$. Hence, the robust expectiles lead to the classical expectiles without uncertainty distributions.
\end{remark}

Now, let's think about the case where $\delta_2>0$, one has that
$$
\lim _{\lambda \rightarrow max\{\alpha,1-\alpha\}}\frac{\partial g_2(X,m,\lambda,\alpha)}{\partial \lambda}=-\infty;\hspace{0.3em}
\lim _{\lambda \rightarrow +\infty}\frac{\partial g_2(X,m,\lambda,\alpha)}{\partial \lambda}=\delta_2>0.
$$
Hence, we know that there exist constants $m_2^* \in \mathbb{R} $ and $\lambda^* >max\{\alpha,1-\alpha\}$, such that
$$
\left\{
\begin{aligned}
\frac{\partial g_2(X,m,\lambda,\alpha)}{\partial \lambda}\bigg|_{m=m_2^*,~\lambda=\lambda^*}=0,\\
\frac{\partial g_2(X,m,\lambda,\alpha)}{\partial m}\bigg|_{m=m_2^*,~\lambda=\lambda^*}=0,\\
\end{aligned}
\right.
$$
and $m_2^*$ is the robust  expectile $e_{\alpha,\varphi_2}(X)$.

In the following, we will investigate the impact of the penalization function on $e_{\alpha,\varphi_2}$ for the specific prior distribution.

Figure 2 depicts the $e_{\alpha,\varphi_2}$ with different prior distributions and different parameters of prior distributions.  Contrary to $e_{\alpha,\varphi_1}$, when $\alpha \in (\frac{1}{2},1)$, $e_{\alpha,\varphi_2}$ is increasing with respect to $\delta_2$, while if $\alpha \in (0,\frac{1}{2})$, $e_{\alpha,\varphi_2}$ is decreasing with respect to $\delta_2$. Similarly, the degree of changes becomes slower with the increase of $\delta_2$.  This is because no matter whether we use the penalization functions $\varphi_1$ or $\varphi_2$, it tends to be that those distributions far away from the baseline distribution should not have a major impacts on our results.

\begin{figure}[H]
\centering
\subfigure[$e_{\alpha,\varphi_2}(X)$ with Normal prior distribution]{
\label{Fig.sub.1}
\includegraphics[width=7.2cm,height = 8cm]{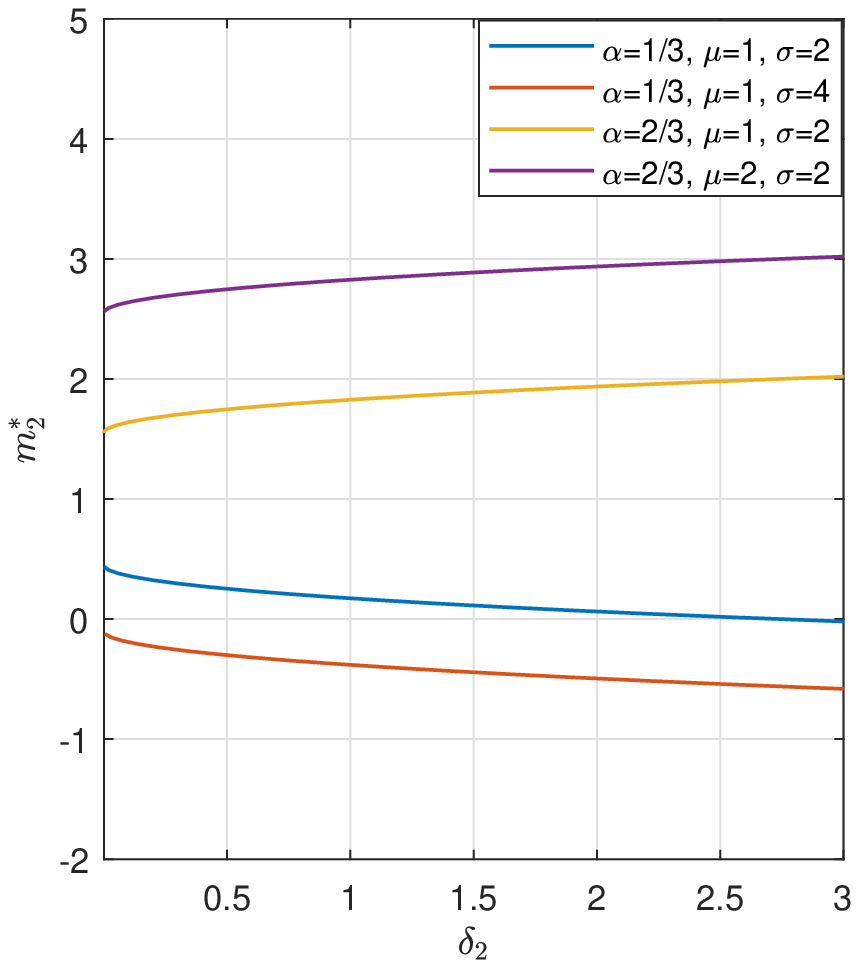}}\subfigure[$e_{\alpha,\varphi_2}(Y)$ with Exponential prior distribution]{
\label{Fig.sub.2}
\includegraphics[width=7.2cm,height =8cm]{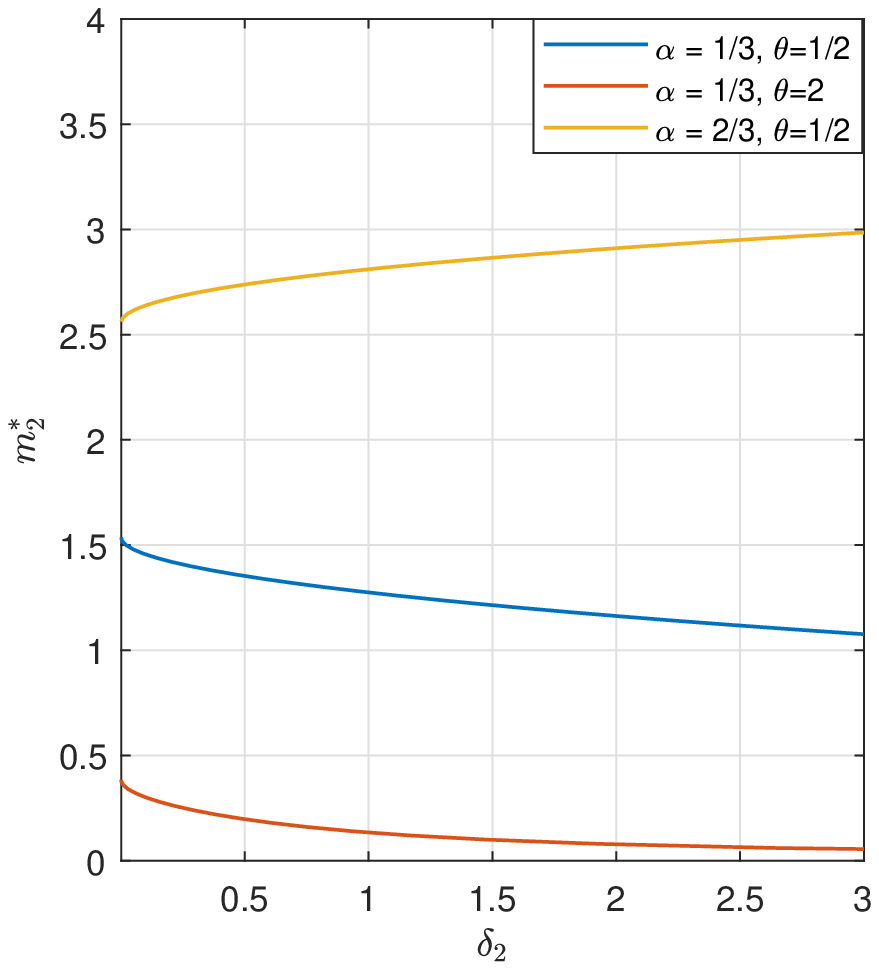}}
\caption{The effect of penalization function $\varphi_2$ on $e_{\alpha,\varphi_2}$ under different prior distributions}
\label{2}
\end{figure}

\subsection{Comparisons with expectiles}
According to Figure 1 and Figure 2, we have analyzed the influence of the coefficient of the penalization functions on robust expectiles under normal distribution and exponential distribution.  In fact, the change trends of robust expectiles with respect to $\delta_1$ or $\delta_2$ under the $T$ distribution are similarly consistent with that under the normal distribution. This subsection considers the differences between the classical expectiles and the robust expectiles under $T$ distribution.

Figure 3 indicates the impact on expectiles after introducing nonlinear expectation with penalty. It reveals that the trends of $e_{\alpha,\varphi_2}$, $e_{\alpha,\varphi_1}$ and expectiles with respect to $\alpha$ are the similar. For $e_{\alpha,\varphi_1}$, it is always  smaller than the expectiles when $\alpha \in (0,\frac{1}{2}]$, and it has already known that $e_{\alpha,\varphi_1}$ is increasing with respect $\delta_1$, which implies that the larger $\delta_1$ is, the closer $e_{\alpha,\varphi_1}$ is to the expectiles. While $e_{\alpha,\varphi_1}$ is always bigger than the expectiles when $\alpha \in [\frac{1}{2},1)$, and $e_{\alpha,\varphi_1}$ is decreasing with respect $\delta_1$, which can also implies that the larger $\delta_1$ is, the closer $e_{\alpha,\varphi_1}$ is to the expectiles. 

However, it is different for $e_{\alpha,\varphi_2}$.  We find that the larger $\delta_2$ is, the farther away $e_{\alpha,\varphi_2}$ is from the expectiles. This is because for $e_{\alpha,\varphi_2}$, the larger $\delta_2$ means the more distributions are considered, and the more deviations from the expectiles should be expected.
\begin{figure}[H]
    \centering
    \includegraphics[width=9cm,height=8cm]{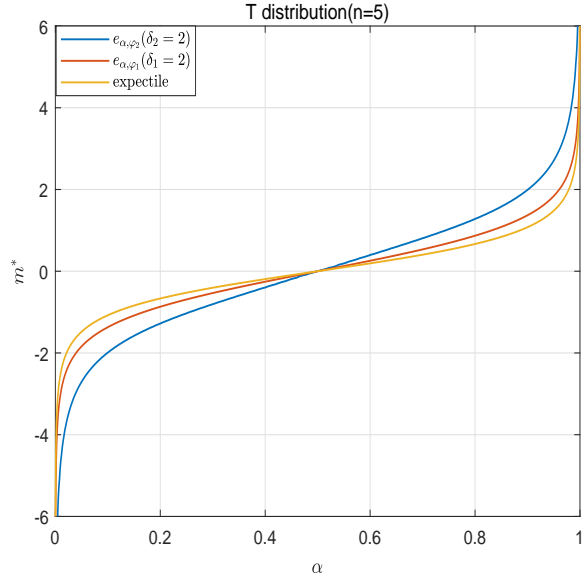}
    \caption{The comparisons between robust expectiles and expectiles under $T$ prior distribution.}
\end{figure}

\section{The Proofs}\label{sec4}
This section provides the proofs of the previous propositions and theorems.
In the following, we will use the duality theorem obtained by \cite{BDT20}.

\begin{lemma}\label{lem:4-1}
(\cite{BDT20}, Theorem 2.7) The following duality theorem holds, i.e.,  for any random variable $X$,
\begin{align}\label{eq:4-1}
\mathcal{OCE}_{l, \varphi}(X)=\inf _{\lambda \geq 0}\left\{OCE_{l^{\lambda c}}(X)+\varphi^*(\lambda)\right\},
\end{align}
with
\begin{align}\label{eq:4-2}OCE_{l^{\lambda c}}(X)=\inf _{m \in \mathbb{R}}\left\{m+
\int_{\mathbb{R}}l^{\lambda c}(x-m)\mu_{X}(dx)\right\}=\inf _{m \in \mathbb{R}}\left\{m+
E_{P}[l^{\lambda c}(X-m)]\right\},
\end{align}where $l$ is a loss function, $\varphi^{*}$ is the convex conjugate of a penalization function $\varphi$ and $l^{\lambda c}$ is $\lambda c$-transform for $l$, which is defined by
$$l^{\lambda c}(x):=\sup _{y \in \mathbb{R}}\{l(y)-\lambda |x-y|^p \}, ~~x\in \mathbb{R}.$$
\end{lemma}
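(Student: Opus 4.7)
\textbf{Proof plan for Lemma \ref{lem:4-1}.} The plan is to reduce the inner ambiguity supremum to an ordinary expectation via Fenchel duality on $\varphi$ and Kantorovich duality on $d_c$, and then swap the two remaining infima. Throughout I fix $m\in\mathbb{R}$ and first work on the nonlinear functional $\mathcal{E}_{\varphi}(l,X,m)$ defined in \eqref{eq:2-2}.

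Since $\varphi:[0,+\infty]\to[0,+\infty]$ is convex, lower semicontinuous, increasing, nonconstant with $\varphi(0)=0$, its convex conjugate $\varphi^{*}$ is finite on some half-line and the biconjugate identity gives
\[
\varphi(t)=\sup_{\lambda\ge 0}\bigl(\lambda t-\varphi^{*}(\lambda)\bigr),\qquad t\ge 0.
\]
Substituting $t=d_c(\mu_X,\mu)$ into \eqref{eq:2-2} and pulling the minus sign inside rewrites the inner problem as
\[
\mathcal{E}_{\varphi}(l,X,m)=\sup_{\mu\in\mathcal{M}(\mathbb{R})}\inf_{\lambda\ge 0}\left(\int_{\mathbb{R}} l(x-m)\mu(dx)-\lambda\, d_c(\mu_X,\mu)+\varphi^{*}(\lambda)\right).
\]
The integrand is affine (hence concave) in $\mu$ and convex in $\lambda$, so I would apply a Sion-type minimax theorem to exchange the $\sup_\mu$ and $\inf_\lambda$. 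Non-compactness of $\mathcal{M}(\mathbb{R})$ is handled by restricting $\mu$ to the sub-level set $\{d_c(\mu_X,\cdot)\le R\}$, which is weakly compact by Prokhorov together with the moment characterization of Wasserstein balls, and letting $R\to\infty$ at the end.

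For each fixed $\lambda\ge 0$ the inner sup is a standard Kantorovich transport problem: writing $d_c(\mu_X,\mu)=\inf_{\pi}\int c\, d\pi$ over couplings with first marginal $\mu_X$, second marginal $\mu$, the sup over $\mu$ becomes a sup over $\pi$ with first marginal $\mu_X$ of $\int(l(y-m)-\lambda|x-y|^p)\,d\pi(x,y)$, which disintegrates pointwise in $x$ to
\[
\int_{\mathbb{R}}\sup_{y\in\mathbb{R}}\bigl(l(y-m)-\lambda|x-y|^p\bigr)\,\mu_X(dx)=\int_{\mathbb{R}} l^{\lambda c}(x-m)\,\mu_X(dx)=E_P\bigl[l^{\lambda c}(X-m)\bigr],
\]
where the equality uses the substitution $z=y-m$ and the definition of the $\lambda c$-transform in the statement. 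A measurable-selection argument or the standard $c$-concavity/measurability lemma justifies pulling the sup inside the integral. Hence
\[
\mathcal{E}_{\varphi}(l,X,m)=\inf_{\lambda\ge 0}\bigl\{E_P[l^{\lambda c}(X-m)]+\varphi^{*}(\lambda)\bigr\}.
\]

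Finally, I substitute this identity into \eqref{eq:21} and exchange the two real-parameter infima over $m$ and $\lambda$, which is unconditionally valid:
\[
\mathcal{OCE}_{l,\varphi}(X)=\inf_{m}\inf_{\lambda\ge 0}\bigl\{m+E_P[l^{\lambda c}(X-m)]+\varphi^{*}(\lambda)\bigr\}=\inf_{\lambda\ge 0}\bigl\{OCE_{l^{\lambda c}}(X)+\varphi^{*}(\lambda)\bigr\},
\]
with $OCE_{l^{\lambda c}}(X)$ as in \eqref{eq:4-2}. I expect the main obstacle to be the rigorous minimax exchange: one needs to verify the sub-level compactness in the weak (or Wasserstein) topology together with upper semicontinuity of $\mu\mapsto\int l(\cdot-m)d\mu$ simultaneously with a tightness estimate so that the passage $R\to\infty$ does not lose anything; a secondary technicality is ensuring that the measurable selection yielding $l^{\lambda c}$ is licit under the sole assumption that $l$ is measurable and bounded below.
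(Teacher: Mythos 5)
The paper does not prove this lemma at all: it is imported verbatim as Theorem~2.7 of \cite{BDT20}, so there is no internal proof to compare against, only the original source's argument. Measured against that, your plan reproduces the correct skeleton --- Fenchel--Moreau on $\varphi$ to introduce the dual variable $\lambda$, the identity
$\sup_{\mu}\bigl(\int l(\cdot-m)\,d\mu-\lambda d_c(\mu_X,\mu)\bigr)=E_P[l^{\lambda c}(X-m)]$
obtained by passing to couplings with first marginal $\mu_X$ and disintegrating, and the harmless exchange of the two infima over $m$ and $\lambda$. The coupling/disintegration step and the final swap are fine (modulo the measurable-selection point you already flag, which is a real but standard technicality for merely measurable $l$).

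The genuine gap is the sup--inf to inf--sup exchange, which is exactly where the content of the theorem lives, and the Sion-type route you propose does not go through as stated. Sion's theorem needs $\mu\mapsto\int l(\cdot-m)\,d\mu-\lambda d_c(\mu_X,\mu)$ to be upper semicontinuous on a compact set; but for $l$ only measurable and bounded from below the map $\mu\mapsto\int l\,d\mu$ is at best \emph{lower} semicontinuous for the weak topology (and the Wasserstein topology does not repair this without continuity and a strict growth condition on $l$ relative to $p$), while $-\lambda d_c(\mu_X,\cdot)$ is the usc piece; their sum is neither. Moreover the truncation to the ball $\{d_c(\mu_X,\cdot)\le R\}$ followed by $R\to\infty$ must be interchanged with $\inf_{\lambda}$, which is not automatic. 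The proof in \cite{BDT20} avoids minimax altogether: weak duality ($\le$) is immediate from the coupling bound, and strong duality is obtained by reducing to the one-dimensional concave function $r\mapsto\sup\{\int l\,d\mu:\ d_c(\mu_X,\mu)\le r\}$ and its conjugate, together with an explicit construction of near-optimal measures that push mass from $x$ to measurable $\varepsilon$-maximizers of $y\mapsto l(y-m)-\lambda^{*}|x-y|^{p}$. If you want a self-contained proof rather than a citation, you should replace the Sion step by that construction.
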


\begin{lemma}\label{lem:4-2} Let a loss function  $l(\cdot): \mathbb{R} \to \mathbb{R}$ be convex and increasing. Then for any $\lambda \geq 0$, the $\lambda c$-transform for the loss function $l$ has the following properties:
\begin{itemize}
\item[(a)] $l^{\lambda c}(x) \geq l(x) $ for any $x \in \mathbb{R}$;
\item[(b)] $l^{\lambda c}(x)$ is increasing with respect to $x$;
\item[(c)] $l^{\lambda c}(x)$ is convex with respect to $(x,\lambda).$
\end{itemize}
\end{lemma}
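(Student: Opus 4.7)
For (a), the argument is immediate from the definition: taking $y = x$ inside the supremum gives $l^{\lambda c}(x) \geq l(x) - \lambda|x-x|^p = l(x)$, and the three cases $\lambda \geq 0$ are all covered since the penalty vanishes at $y=x$.

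For (b), I would exploit the monotonicity of $l$ via a translation trick. Fix $x_1 \leq x_2$, and for each $y \in \mathbb{R}$ set $y' := y + (x_2 - x_1) \geq y$. Since $l$ is increasing, $l(y') \geq l(y)$, and by construction $|x_2 - y'| = |x_1 - y|$, so
\[
l(y) - \lambda|x_1-y|^p \;\leq\; l(y') - \lambda|x_2-y'|^p \;\leq\; l^{\lambda c}(x_2).
\]
Taking the supremum over $y$ on the left yields $l^{\lambda c}(x_1) \leq l^{\lambda c}(x_2)$. Convexity of $l$ is not needed here; monotonicity alone suffices.

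For (c), joint convexity in $(x,\lambda)$ is the main obstacle, because for fixed $y$ the integrand $l(y) - \lambda|x-y|^p$ is \emph{not} jointly convex (it is concave in $x$ for fixed $\lambda \geq 0$ and linear in $\lambda$), so the naive ``sup of jointly convex is jointly convex'' argument fails. My plan is to dualize $l$ first. Since $l$ is convex and increasing, it admits a Legendre--Fenchel representation $l(y) = \sup_{\alpha \geq 0}\{\alpha y - l^*(\alpha)\}$ (monotonicity forces $l^*(\alpha) = +\infty$ for $\alpha < 0$). Inserting this into the definition, swapping the two suprema, and making the change of variables $w = y - x$ gives
\[
l^{\lambda c}(x) \;=\; \sup_{\alpha \geq 0}\bigl\{\alpha x - l^*(\alpha) + K(\lambda,\alpha)\bigr\}, \qquad K(\lambda,\alpha) := \sup_{w\in\mathbb{R}}\bigl\{\alpha w - \lambda|w|^p\bigr\}.
\]
The crucial analytic observation is that $K$ is a perspective-type function: an explicit computation gives $K(\lambda,\alpha) = C_p |\alpha|^{p/(p-1)} \lambda^{-1/(p-1)}$ for $p > 1$ and $K(\lambda,\alpha) = \infty \cdot \mathbbm{1}_{\{|\alpha| > \lambda\}}$ for $p=1$, and in both cases $K$ is jointly convex in $(\lambda,\alpha)$ on $\lambda > 0$ (for $p>1$ this follows from the standard fact that $\lambda f(\alpha/\lambda)$ is jointly convex whenever $f$ is convex). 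Consequently, for each fixed $\alpha \geq 0$ the map $(x,\lambda) \mapsto \alpha x - l^*(\alpha) + K(\lambda,\alpha)$ is affine in $x$ and convex in $\lambda$, hence jointly convex in $(x,\lambda)$; joint convexity is preserved under pointwise suprema over $\alpha \geq 0$, and the boundary $\lambda = 0$ can be handled by lower-semicontinuity of $l^{\lambda c}$ in $\lambda$. The technical heart of the argument is recognizing the perspective-function structure of $K$; once that is in place the rest is bookkeeping.
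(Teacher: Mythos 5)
Your proof is correct, and parts (a) and (b) coincide in substance with the paper's: your translation trick in (b) is exactly the paper's change of variables $y\mapsto x-y$, applied term by term. For (c), however, you take a genuinely different and much heavier route than is needed. You are right that in the parametrization $\sup_{y}\{l(y)-\lambda|x-y|^{p}\}$ the integrand is not jointly convex in $(x,\lambda)$ for fixed $y$, but that obstacle is an artifact of the parametrization rather than of the problem. After the very substitution you already used in (b), one has $l^{\lambda c}(x)=\sup_{y}\{l(x-y)-\lambda|y|^{p}\}$, and now for each fixed $y$ the map $(x,\lambda)\mapsto l(x-y)-\lambda|y|^{p}$ is convex in $x$ (the convex $l$ composed with a translation) and affine in $\lambda$, with the variables separated, hence jointly convex; a pointwise supremum of jointly convex functions is jointly convex, and (c) follows in three lines, uniformly for all $\lambda\ge 0$ including the boundary. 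This is the paper's proof. Your alternative --- biconjugating $l$, exchanging suprema, and recognizing $K(\lambda,\alpha)=\sup_{w}\{\alpha w-\lambda|w|^{p}\}$ as the perspective $\lambda\,(|\cdot|^{p})^{*}(\alpha/\lambda)$ --- is valid and has the side benefit of producing an explicit dual formula for $l^{\lambda c}$ (essentially the structure the paper exploits later when computing $h^{\lambda c}$ in Proposition \ref{prop:3-1}), but it costs you Fenchel--Moreau, a case split between $p=1$ and $p>1$, and a separate limiting argument at $\lambda=0$, which you only sketch and which is most cleanly closed by the monotonicity of $l^{\lambda c}$ in $\lambda$ together with its continuity in $\lambda$ on $(0,\infty)$. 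The elementary substitution buys all three parts at once.
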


\begin{proof}
By substitution of variables,  for any $\lambda \geq 0$, the $\lambda c$-transform can be expressed as
$$l^{\lambda c}(x)=\sup _{y \in \mathbb{R}}\left\{l(x-y)-\lambda |y|^p\right\}, ~~\forall x\in\mathbb{R}.$$
Hence,  $(a)$  obviously holds.  Since the loss function $l$ is increasing, it implies that $(b)$ is true.

To the part $(c)$,  for any $t \in (0,1)$, $x_1,x_2\in \mathbb{R}$ and $\lambda_1, \lambda_2\geq 0$,  since the loss function $l(\cdot)$ is convex, it follows that
\begin{align*}
&l^{(t\lambda_1+(1-t)\lambda_2) c}(tx_1+(1-t)x_2)\\
=~&\sup _{y \in \mathbb{R}}\left\{l(tx_1+(1-t)x_2-y)-(t\lambda_1+(1-t)\lambda_2) |y|^p\right\}\\
\leq ~&\sup _{y \in \mathbb{R}}\{tl(x_1-y)+(1-t)l(x_2-y)-(t\lambda_1+(1-t)\lambda_2) |y|^p\}\\
\leq~ & t \sup _{y \in \mathbb{R}}\{l(x_1-y)-\lambda_1 |y|^p\}+(1-t) \sup _{y \in \mathbb{R}}\{l(x_2-y)-\lambda_2 |y|^p\}\\
=~& tl^{\lambda_1c}(x_1)+(1-t)l^{\lambda_2c}(x_2).
\end{align*}\hfill$\Box$
\end{proof}

\noindent
\textbf{\emph{Proof of Proposition} \ref{prop:2-1}.}
The prior distribution invariance property $(a)$ is obvious derived from the definition of $\mathcal{OCE}_{l, \varphi}(\cdot)$.

For any random variable $X$, by Lemma \ref{lem:4-1},
we obtain that
\begin{align*}
\mathcal{OCE}_{l, \varphi}(X)=\inf _{\lambda \geq 0}\left\{OCE_{l^{\lambda c}}(X)+\varphi^*(\lambda)\right\},
\end{align*}
with $OCE_{l^{\lambda c}}(X)=\inf _{m \in \mathbb{R}}\left\{m+E_P[l^{\lambda c}(X-m)]\right\}.$
Therefore, for any $C\in\mathbb{R}$,
\begin{align*}
OCE_{l^{\lambda c}}(X+C)&=\inf _{m \in \mathbb{R}}\left\{m+E_P[l^{\lambda c}(X+C-m)]\right\}\\
&=\inf _{m_1 \in \mathbb{R}}\left\{m_1+C+E_P[l^{\lambda c}(X-m_1)]\right\}\\
&=OCE_{l^{\lambda c}}(X)+C.
\end{align*}
It means that $(b)$ holds.

If $X\leq Y$,$P$-a.s., since $l$ is increasing, by Lemma \ref{lem:4-2}, we know that the $\lambda c$-transform $l^{\lambda c}(\cdot)$ is increasing, then  it implies that
$$OCE_{l^{\lambda c}}(X)\leq OCE_{l^{\lambda c}}(Y).$$
Hence, $\mathcal{OCE}_{l,\varphi}(X)\leq\mathcal{OCE}_{l,\varphi}(Y),$
i.e., $(c)$ is true.

Since $l(\cdot)$ is convex,  Lemma \ref{lem:4-2} leads to $l^{\lambda c}(x)$ is convex with respect to $(x,\lambda)$.  Therefore, for any $t\in(0,1)$,  random variables $X,Y$ and $\lambda_1, \lambda_2\geq 0$,
\begin{align}\label{eq:4-3}
&\hspace{2em}OCE_{l^{(t\lambda_1+(1-t)\lambda_2) c}}(tX+(1-t)Y)\nonumber\\
&=\inf _{m_1,m_2 \in \mathbb{R}}\left\{tm_1+(1-t)m_2+E_P[l^{(t\lambda_1+(1-t)\lambda_2) c}(tX+(1-t)Y-(tm_1+(1-t)m_2))]\right\}\nonumber\\
&\leq t \inf _{m_1 \in \mathbb{R}}\left\{m_1+E_P[l^{\lambda_1 c}(X-m_1)]\right\}+(1-t)\inf _{m_2 \in \mathbb{R}}\left\{m_2+E_P[l^{\lambda_1 c}(Y-m_2)]\right\}\nonumber\\
&=tOCE_{l^{\lambda_1 c}}(X)+(1-t)OCE_{l^{\lambda_2 c}}(Y).
\end{align}

On the other hand, since $\varphi^{*}$ is the convex conjugate of the penalization function $\varphi$, then  $\varphi^*(\cdot)$ is convex.  Combing the the convexity of OCE in \eqref{eq:4-3},  it implies that
\begin{align*}
\mathcal{OCE}_{l, \varphi}(tX+(1-t)Y)&=\inf _{\lambda \geq 0}\left\{OCE_{l^{\lambda c}}(tX+(1-t)Y)+\varphi^*(\lambda)\right\}\\
&= \inf _{\lambda_1,\lambda_2 \geq 0}\left\{OCE_{l^{(t\lambda_1+(1-t)\lambda_2) c}}(tX+(1-t)Y)+\varphi^*(t\lambda_1+(1-t)\lambda_2)\right\} \\
&\leq  t\inf _{\lambda_1\geq 0}\left\{OCE_{l^{\lambda_1 c}}(X)+\varphi^*(\lambda_1)\right\}+ (1-t)\inf _{\lambda_2\geq 0}\left\{OCE_{l^{\lambda_2 c}}(Y)+\varphi^*(\lambda_2)\right\}\\
&= t\mathcal{OCE}_{l, \varphi}(X)+(1-t)\mathcal{OCE}_{l, \varphi}(Y).
\end{align*}
Thus,  $\mathcal{OCE}_{l, \varphi}(\cdot)$ is convex.  The proof is complete.   \hfill$\Box$

\vspace{10pt}

\noindent\textbf{\emph{Proof of Proposition} \ref{prop:2-3}}

\hspace{0.5em}
By Lemma \ref{lem:4-1},
we obtain that
\begin{align*}
\mathcal{OCE}_{l, \varphi}(X)=\inf _{\lambda \geq 0}\left\{OCE_{l^{\lambda c}}(X)+\varphi^*(\lambda)\right\},
\end{align*}
with $OCE_{l^{\lambda c}}(X)=\inf _{m \in \mathbb{R}}\left\{m+E_{P}[l^{\lambda c}(X-m)]\right\}.$

Since $\Lambda_{l}(X,\varphi) \neq \emptyset $, one has that
\begin{align}\label{eq:4-4}
\mathcal{OCE}_{l, \varphi}(X)&=\inf _{\lambda \geq 0}\left\{OCE_{l^{\lambda c}}(X)+\varphi^*(\lambda)\right\}\nonumber\\
&=\inf _{\lambda \in \Lambda_{l}(X,\varphi)}\left\{OCE_{l^{\lambda c}}(X)+\varphi^*(\lambda)\right\}.
\end{align}
For any $\lambda \in \Lambda_{l}(X,\varphi)$, let $F(m)=m+E_{P}[l^{\lambda c}(X-m)]$. Then we only need to consider those $m$ that make $F(m)<+\infty$. In this case, since $supp X=[x_{min},x_{max}]$ with $-\infty <x_{min} \leq x_{max} <+\infty$,  by Lebesgue's dominated convergence theorem, we can freely interchange integration with one-sided derivation. Hence, one has that
\begin{align*}
F^{'}_{+}(m)=1-E_{P}[(l^{\lambda c})^{'}_{+}(X-m)] \textrm{~~and~~} F^{'}_{-}(m)=1-E_{P}[(l^{\lambda c})^{'}_{-}(X-m)].
\end{align*}

Since $l$ is convex, which leads to $F(m)$ is convex in $m$. If $m^*$ is the optimal solution of $OCE_{l^{\lambda c}}(X)$, then it should be
\begin{align*}
F^{'}_{-}(m^*)\leq 0\leq F^{'}_{+}(m^*).
\end{align*}
Then, one has that
\begin{align*}
E_{P}[(l^{\lambda c})^{'}_{-}(X-m^*)]\leq 1\leq E_{P}[(l^{\lambda c})^{'}_{+}(X-m^*)].
\end{align*}

Next, we prove there exists $m^*\in supp X$.   If $m^*<x_{min}$, then $X-m^*>X-x_{min}$. Since $l\in L(X,\varphi)$ and $l^{\lambda c}(\cdot)$ is convex, which leads to $(l^{\lambda c})^{'}_{+}$ and $(l^{\lambda c})^{'}_{-}$ are nondecreasing, we can obtain that
\begin{align*}
1\geq E_{P}[(l^{\lambda c})^{'}_{-}(X-m^*)]\geq E_{P}[(l^{\lambda c})^{'}_{+}(X-x_{min})]\geq E_{P}[(l^{\lambda c})^{'}_{-}(X-x_{min})]\geq(l^{\lambda c})^{'}_{+}(0).
\end{align*}
Similarly, if $m^*>x_{max}$, we can obtain
\begin{align*}
(l^{\lambda c})^{'}_{-}(0) \geq E_{P}[(l^{\lambda c})^{'}_{+}(X-x_{max})]\geq E_{P}[(l^{\lambda c})^{'}_{-}(X-x_{max})]\geq E_{P}[(l^{\lambda c})^{'}_{+}(X-m^*)]\geq 1.
\end{align*}
Since $l \in L(X,\varphi)$, then for any $\lambda \in \Lambda_{l}(X,\varphi)$,  $l^{\lambda c}(x)\geq l^{\lambda c}(0)+x$ for all $x\in\mathbb{R}$, and $l^{\lambda c}(\cdot)$ is convex, one has that
\begin{align*}
(l^{\lambda c})^{'}_{+}(0)\geq 1 \geq(l^{\lambda c})^{'}_{-}(0),
\end{align*}
which leads to that
\begin{align*}
E_{P}[(l^{\lambda c})^{'}_{+}(X-x_{min})]= E_{P}[(l^{\lambda c})^{'}_{-}(X-x_{min})]=1,
\end{align*}
or
\begin{align*}
E_{P}[(l^{\lambda c})^{'}_{+}(X-x_{max})]= E_{P}[(l^{\lambda c})^{'}_{-}(X-x_{max})]=1.
\end{align*}
Hence, $x_{min}$ or $x_{max}$ is the optimal solution of $OCE_{l^{\lambda c}}(X)$.  This is a contradiction.

Therefore, for any $\lambda \in \Lambda_{l}(X,\varphi)$, we have
\begin{align*}
OCE_{l^{\lambda c}}(X)=\min _{m \in supp X}\left\{m+E_{P}[l^{\lambda c}(X-m)]\right\}.
\end{align*}
Then, according to \eqref{eq:4-4}, one has that
\begin{align*}
\mathcal{OCE}_{l, \varphi}(X)&=\inf _{\lambda \in \Lambda_{l}(X,\varphi)}\left\{\min _{m \in supp X}\left\{m+E_{P}[l^{\lambda c}(X-m)]\right\}+\varphi^*(\lambda)\right\}\\
&=\inf _{\lambda \geq 0}\left\{\min _{m \in supp X}\left\{m+E_{P}[l^{\lambda c}(X-m)]\right\}+\varphi^*(\lambda)\right\}\\
&=\min _{m \in supp X}\left\{m+\inf _{\lambda \geq 0}\left\{\varphi^*(\lambda)+E_{P}[l^{\lambda c}(X-m)]\right\}\right\}\\
&=\min _{m \in supp X}\left\{m+\mathcal{E}_{\varphi}(l,X,m)\right\}.
\end{align*}

\hfill$\Box$

\vspace{10pt}

\noindent\textbf{\emph{Proof of Proposition} \ref{prop:2-4}}

$(a)$  Since $l_1$ and $l_2$ are convex, then for each $x$,  $h(x-m)$ is convex in $m$. Hence, for each $t\in(0,1)$, for all $m_1,m_2 \in \mathbb{R}$,
\begin{align*}
& \hspace{2em}\mathcal{E}_{\varphi}(h,X,tm_1+(1-t)m_2))\\
&=\sup _{\mu \in \mathcal{M}(\mathbb{R})}\left(\int_{\mathbb{R}} h(x-(tm_1+(1-t)m_2))\mu(dx)-\varphi\left(d_{c}\left(\mu_{X}, \mu\right)\right)\right)\\
&\leq \sup _{\mu \in \mathcal{M}(\mathbb{R})}\left(t \int_{\mathbb{R}} h(x-m_1)\mu(dx)+(1-t)\int_{\mathbb{R}} h(x-m_2)d\mu-\varphi\left(d_{c}\left(\mu_{X}, \mu\right)\right)\right)\\
&\leq t\sup _{\mu \in \mathcal{M}(\mathbb{R})}\left(\int_{\mathbb{R}} h(x-m_1)\mu(dx)-\varphi\left(d_{c}\left(\mu_{X}, \mu\right)\right)\right)\\
&+(1-t)\sup _{\mu \in \mathcal{M}(\mathbb{R})}\left(\int_{\mathbb{R}} h(x-m_2)\mu(dx)-\varphi\left(d_{c}\left(\mu_{X}, \mu\right)\right)\right)\\
&=t\mathcal{E}_{\varphi}(h,X,m_1)+(1-t)\mathcal{E}_{\varphi}(h,X,m_2),
\end{align*}
which implies $\mathcal{E}_{\varphi}(h,X,m)$ is convex about $m$.

On the other hand, from the definition of $\mathcal{E}_{\varphi}(h,X,m)$, it follows that
\begin{align*}
\mathcal{E}_{\varphi}(h,X,m)&=\sup _{\mu \in \mathcal{M}(\mathbb{R})}\left(\int_{\mathbb{R}} h(x-m)\mu(dx)-\varphi\left(d_{c}\left(\mu_{X}, \mu\right)\right)\right)\\
&\geq E_P[\alpha l_1((X-m)^+)+(1-\alpha)l_2((X-m)^-)].
\end{align*}
Since the monotonicity and convexity properties of the loss functions $l_{1}$ and $l_{2}$, then by the monotone convergence theorem, it derives
\begin{align*}
\lim _{m \rightarrow-\infty} E_P\left[\alpha l_1((X-m)^+)+(1-\alpha)l_2((X-m)^-)\right]=+\infty,
\end{align*} and
\begin{align*}
\lim _{m \rightarrow+\infty} E_P\left[\alpha l_1((X-m)^+)+(1-\alpha)l_2((X-m)^-)\right] =+\infty.
\end{align*}
Hence, one has that
$$
\lim _{m \rightarrow-\infty} \mathcal{E}_{\varphi}(h,X,m)=\lim _{m \rightarrow+\infty} \mathcal{E}_{\varphi}(h,X,m)=+\infty .
$$
$(b)$  Since $\mathcal{E}_{\varphi}(h,X,0)<+\infty$, then it implies  $\underset{m \in \mathbb{R}}{\operatorname{inf}}~\mathcal{E}_{\varphi}(h,X,m)<+\infty$. Due to the facts that $\mathcal{E}_{\varphi}(h,X,m)$ is convex about $m$ and $\lim _{m \rightarrow \pm\infty}\mathcal{E}_{\varphi}(h,X,m) =+\infty$, then
it is easy to find that there exists a closed interval $[m_1,m_2]$, such that
$$[m_1,m_2]=\underset{m \in \mathbb{R}}{\operatorname{argmin}}~\hspace{0.1em} \mathcal{E}_{\varphi}(h,X,m).$$
\hfill$\Box$

\noindent\textbf{\emph{Proof of Lemma} \ref{lem:2-1}}

Since $h$ has polynomial growth,  it means there exists a constant $C\geq 0$ such that for all $x\in \mathbb{R}$, $h(x)\leq C(1+|x|^p)$, then it implies that, for each $\lambda\geq 0$ and $x \in \mathbb{R}$,
\begin{align*}
\sup_{y\in \mathbb{R}} \{h(y)-\lambda |x-y|^p \}&\leq\sup_{y\in \mathbb{R}} \{C(1+|y|^p)-\lambda|x-y|^p \}.
\end{align*}
Obviously, there exists a $\lambda^* >C$, such that $h^{\lambda^* c}(x)<+\infty$ for all $x \in \mathbb{R}$.
\hfill$\Box$

\vspace{10pt}

\noindent\textbf{\emph{Proof of Proposition} \ref{prop:2-5}}

For any $\alpha\in(0,1)$, $m\in \mathbb{R}$,  then $$h(x-m)=\alpha (x-m)^{+}+(1-\alpha)(x-m)^{-}, ~~\forall x\in\mathbb{R}.$$
It is obvious that $h(\cdot-m)$ is bounded from below.  Based on Lemma \ref{lem:4-1}, we can obtain the $\lambda c$-transform of $h(\cdot-m)$. Directly calculations, it derives that
$$
h^{\lambda c}(x-m)= \begin{cases}\alpha(x-m)^{+}+(1-\alpha)(x-m)^{-}, & \text { if } \lambda \geq max\{\alpha,1-\alpha\}, \\ +\infty, & \text { if }\lambda < max\{\alpha,1-\alpha\},\end{cases}
$$where $\varphi^{*}$ is the convex conjugate for the penalization function $\phi$.
Then,
$$\mathcal{E}_{\varphi}(h,X,m)=\inf _{\lambda \geq max\{\alpha,1-\alpha\}}\{E_{P}\left[\alpha(X-m)^{+}+(1-\alpha)(X-m)^{-}\right]+\varphi^{*}(\lambda)\}.$$
Since $\varphi^{*}(\lambda)$ is increasing with respect to $\lambda$, then
\begin{align*}
\mathcal{E}_{\varphi}(h,X,m)=E_{P}\left[\alpha(X-m)^{+}+(1-\alpha)(X-m)^{-}\right]+\varphi^{*}(max\{\alpha,1-\alpha\}).
\end{align*}
Hence, it obvious that
 $$\underset{m \in \mathbb{R}}{\operatorname{argmin}}\left\{\mathcal{E}_{\varphi}(h,X,m)\right\}=\underset{m \in \mathbb{R}}{\operatorname{argmin}}\{\alpha E_{P}\left[(X-m)^{+}\right]+(1-\alpha) E_{P}\left[(X-m)^{-}\right]\}.$$
 By Exercise 4.4.1 in \cite{FS16}, its optimal solution $m^*\in \underset{m \in \mathbb{R}}{\operatorname{argmin}}\mathcal{E}_{\varphi}(h,X,m)$ satisfies
 $$P(X<m^*)\leq\alpha \leq P(X\leq m^*).$$
It is exactly the $VaR_\alpha(X)$ in the classical situation, which means $q_{\alpha,\varphi}(X)=VaR_{\alpha}(X)$.
\hfill$\Box$

\vspace{10pt}

\noindent \textbf{\emph{Proof of Proposition} \ref{prop:3-1}}

For each $\alpha\in(0,1)$ and $m\in\mathbb{R}$, denote
$$h(x-m)=\alpha((x-m)^{+})^2+(1-\alpha))((x-m)^{-})^2,  \forall x\in\mathbb{R}.$$
Then, the $\lambda c$-transform of loss function $h(\cdot-m)$ can be calculated as follows and it can be divided into three cases.

Case (i):  When $\alpha \in (0,\frac{1}{2})$, then
$$
h^{\lambda c}(x-m)= \begin{cases}+\infty, & \text { if } \lambda< 1-\alpha, \\
\\
\frac{\alpha (1-\alpha) }{1-2\alpha}((x-m)^{+})^2+\infty I_{(-\infty, m)}(x), & \text { if } \lambda= 1-\alpha,\\
\\
\frac{\alpha\lambda}{\lambda-\alpha}((x-m)^{+})^2+\frac{(1-\alpha)\lambda}{\lambda-(1-\alpha)}((x-m)^{-})^2, & \text { if }\lambda > 1-\alpha. \end{cases}
$$

Case (ii): When $\alpha =\frac{1}{2}$, then
$$
h^{\lambda c}(x-m)= \begin{cases}+\infty, & \text { if } \lambda \leq \frac{1}{2}, \\
\\
\frac{\alpha\lambda}{\lambda-\alpha}((x-m)^{+})^2+\frac{(1-\alpha)\lambda}{\lambda-(1-\alpha)}((x-m)^{-})^2, & \text { if }\lambda > \frac{1}{2}. \end{cases}
$$

Case (iii): When $\alpha \in (\frac{1}{2},1)$, then
$$
h^{\lambda c}(x-m)= \begin{cases}+\infty, & \text { if } \lambda< \alpha, \\
\\
\infty I_{(m,+\infty)}(x)+\frac{(1-\alpha)\alpha}{2\alpha-1}((x-m)^{-})^2, & \text { if } \lambda= \alpha,\\
\\
\frac{\alpha\lambda}{\lambda-\alpha}((x-m)^{+})^2+\frac{(1-\alpha)\lambda}{\lambda-(1-\alpha)}((x-m)^{-})^2, & \text { if }\lambda > \alpha. \end{cases}
$$

Since $\varphi_1(x)=\delta_1 x$ , one has that $\varphi_1^{*}(\lambda)=\infty I_{(\delta_1,+\infty)}(\lambda)$. Therefore, by Lemma \ref{lem:4-1}, it follows that
\begin{align*}
\mathcal{E}_{\varphi_1}(h,X,m)&=\sup _{\mu \in \mathcal{M}(\mathbb{R})}\left(\int_{\mathbb{R}} h(x-m)\mu(dx)-\delta_1d_{c}\left(\mu_{X}, \mu\right)\right)\\
&=\inf _{\lambda \geq 0 }\left\{E_P[h^{\lambda c}(X-m)]+\infty I_{(\delta_1,+\infty)}(\lambda) \right\} \\
&=\inf _{0 \leq \lambda \leq \delta_1 }~E_P[h^{\lambda c}(X-m)]\\
&=\inf _{max\{\alpha,1-\alpha\} < \lambda \leq \delta_1 }g_1(X,m,\lambda,\alpha),
\end{align*}
where
$$g_1(X,m,\lambda,\alpha)=\frac{\alpha\lambda}{\lambda-\alpha}E_P\left[((X-m)^{+})^2\right]+\frac{(1-\alpha)\lambda}{\lambda-(1-\alpha)}E_P\left[((X-m)^{-})^2\right].$$
It is clear that $g_1(X,m,\lambda,\alpha)$ is decreasing with respect to $\lambda$. Thus, one has that
$$\inf _{max\{\alpha,1-\alpha\} < \lambda \leq \delta_1 }g_1(X,m,\lambda,\alpha)=g_1(X,m,\delta_{1},\alpha).$$
\hfill$\Box$

\noindent \textbf{\emph{Proof of Theorem} \ref{th:3-1}.}
For any $\alpha\in (\frac{1}{2}, 1)$ and $\delta_{1}>\alpha$, we verify that $e_{\alpha,\varphi_1}(\cdot)$ satisfies the axioms of the coherent risk measures.

\emph{(i)} Translation invariance.    For any constants $C$ and $m$ in $\mathbb{R}$, since
 $$g_1(X+C,m,\delta_{1}, \alpha)=g_1(X,m-C, \delta_{1},\alpha).$$
Hence, by the definition of robust expectile, then it implies that
$$e_{\alpha,\varphi_1}(X+C)=e_{\alpha,\varphi_1}(X)+C.$$

\emph{(ii)} Monotonicity.
Since $X\in L^{2}(\Omega, \mathcal{F},P)$, then it can be verified that $g_1(X,m,\delta_{1},\alpha)$ is differentiable with respect to $m$, and
\begin{align*}
g_1'(X,m,\delta_{1},\alpha)=-\frac{2\alpha\delta_1}{\delta_1-\alpha}E_P\left[(X-m)^{+}\right]+\frac{2(1-\alpha)\delta_1}{\delta_1-(1-\alpha)}E_P\left[(X-m)^{-}\right], ~\forall m\in\mathbb{R}.
\end{align*}
Since $e_{\alpha,\varphi_1}(X)= \underset{m \in \mathbb{R}}{\operatorname{argmin}} ~g_1(X,m,\delta_{1},\alpha)$ and  $g_1(X,\cdot,\delta_{1},\alpha)$ is convex, then its optimal value for $m$ should satisfy
$$g_1'(X,m,\delta_{1},\alpha)=0.$$
Since $\delta_1>max\{\alpha,1-\alpha\}$, one has that
$$-\frac{2\alpha\delta_1}{\delta_1-\alpha}<0 ~~\textrm{and}~~\frac{2(1-\alpha)\delta_1}{\delta_1-(1-\alpha)}>0.$$
Then, for the given $\delta_{1},m$ and $\alpha$, $g_1'(X,m,\delta_{1},\alpha)$ is decreasing with respect to $X$, i.e.,
if  $X\leq Y$, $P$-a.s., then $g_1'(X,m,\delta_{1},\alpha)\geq g_1'(Y,m,\delta_{1},\alpha)$.

Therefore, when $X\leq Y$, $P$-a.s., then
$$g_1'(Y,e_{\alpha,\varphi_1}(X),\delta_{1},\alpha)\leq g_1'(X,e_{\alpha,\varphi_1}(X),\delta_{1},\alpha)=0 .$$
Since $g_1'(Y,m,\delta_{1},\alpha)$ is increasing with respect to $m$, and $g_1'(Y,e_{\alpha,\varphi_1}(Y),\delta_{1},\alpha)=0$, then it implies that
$$e_{\alpha,\varphi_1}(X)\leq e_{\alpha,\varphi_1}(Y).$$

\emph{(iii)} Convexity.   Since $\alpha \in (\frac{1}{2},1)$ and $\delta_1>max\{\alpha,1-\alpha\}$, then it can derive that
$$\frac{2\alpha\delta_1}{\delta_1-\alpha} \geq \frac{2(1-\alpha)\delta_1}{\delta_1-(1-\alpha)},$$
which inplies that $g_1'(X,m, \delta_{1},\alpha)$ is concave with respect to $(X,m)$.  Recall that
$$g_1'(X,e_{\alpha,\varphi_1}(X), \delta_{1},\alpha)=0 \textrm{ and } g_1'(Y,e_{\alpha,\varphi_1}(Y), \delta_{1},\alpha)=0.$$
For any $t\in (0,1)$, it then implies
\begin{align*}
& g_1'\left(tX+(1-t)Y,  t e_{\alpha,\varphi_1}(X)+(1-t) e_{\alpha,\varphi_1}(Y), \delta_{1},\alpha\right)\\
\geq &~ t   g_1'\left(X,e_{\alpha,\varphi_1}(X), \delta_{1},\alpha\right)+(1-t) g_1'\left(Y,e_{\alpha,\varphi_1}(Y), \delta_{1},\alpha\right)\\
=& 0.
\end{align*}
Since $g_1'(X,m, \delta_{1},\alpha)$ is increasing with respect to $m$, and
$$g_1'(tX+(1-t)Y,e_{\alpha,\varphi_1}(tX+(1-t)Y), \delta_{1},\alpha)=0.$$
Therefore, for the given random variables $X$ and $Y$, for each $t\in (0,1)$, $\alpha \in (\frac{1}{2},1)$, and $\delta_1>max\{\alpha,1-\alpha\}$, then

$$e_{\alpha,\varphi_1}(tX+(1-t)Y)\leq t e_{\alpha,\varphi_1}(X)+(1-t) e_{\alpha,\varphi_1}(Y).$$

\emph{(iv)} Positive homogeneity.  When $t=0$, it is obvious that $e_{\alpha,\varphi_1}(0)=0$.   For any $t>0$, one has that
$$g_1'(tX,m, \delta_{1},\alpha)=tg_1'(X,\frac{m}{t}, \delta_{1},\alpha).$$
Hence, for any $t\geq 0$,  $e_{\alpha,\varphi_1}(tX)=t \hspace{0.2em} e_{\alpha,\varphi_1}(X)$.  \hfill$\Box$

\vspace{10pt}

\noindent \textbf{\emph{Proof of Theorem} \ref{th:3-2}}

Suppose $\alpha \in (\frac{1}{2},1)$. By Theorem \ref{th:3-1},  $e_{\alpha,\varphi_1}(X)$ is also a convex risk measure.  Based on Proposition 4.113 and Theorem 4.115 in \cite{FS16},   $e_{\alpha,\varphi_1}(X)$ can be represented as
$$e_{\alpha,\varphi_1}(X)=\underset{Q \in \mathcal{M}(P_X)}{\operatorname{max}}\left\{E_{Q}[X]- \inf _{t>0}\frac{1}{t}E_{P}\left[\psi^{*}(t\frac{\mathrm{d}Q}{\mathrm{d}P})\right]  \right\},$$
where
$$
\mathcal{M}(P)=\left\{Q:
Q \textrm{ is  absolutely  continuous  with respect to } P \right\},
$$and  $\psi^{*}$ is the dual conjugate function of $\psi$ with
$$
\psi^{*}(x)=
\begin{cases} 0  & \text {  } \frac{2(1-\alpha)\delta_1}{\delta_1-(1-\alpha)}\leq x \leq \frac{2\alpha\delta_1}{\delta_1-\alpha}, \\
+\infty & \text {  } else. \\
\end{cases}
$$
Hence, it derives that
$$\inf _{t>0}\frac{1}{t}E_{P}[\psi^{*}(t\frac{\mathrm{d}Q}{\mathrm{d}P})]=\begin{cases} 0  & \exists t_{0}>0, s.t., \frac{2(1-\alpha)\delta_1}{\delta_1-(1-\alpha)}\leq t_0\frac{\mathrm{d}Q}{\mathrm{d}P} \leq \frac{2\alpha\delta_1}{\delta_1-\alpha},\\
+\infty & \text {  } else. \\
\end{cases}
$$
Thus,  $e_{\alpha,\varphi_1}(X)=\underset{\mu \in \mathcal{M}_1(P)}{\operatorname{max}} E_{Q}[X]$.

Since $e_{\alpha,\varphi_1}(X)=-e_{1-\alpha,\varphi_1}(-X)$, it is easy to obtain the dual representation of $e_{\alpha,\varphi_1}(X)$ when $\alpha\in(0,\frac{1}{2})$.  \hfill$\Box$

\vspace{10pt}

\noindent \textbf{\emph{Proof of Proposition} \ref{prop:3-2}}

Since $\varphi_2(x)=\infty I_{(\delta_2,+\infty)}(x)$, then the convex conjugate function of $\varphi_{2}$ is
 $$\varphi_2^{*}(y)=\delta_2 y^{+},~~y\in\mathbb{R}.$$
By Lemma \ref{lem:4-1}, it implies that
\begin{align*}
\mathcal{E}_2(h,X,m)&= \sup _{\mu \in \mathcal{M}(\mathbb{R}),
d_{c}(\mu_{X}, \mu)\leq \delta_{2}}
\int_{\mathbb{R}} h(x-m)\mu(dx) \\
&=\inf _{\lambda \geq 0 }\left\{E_P[h^{\lambda c}(X-m)]+\delta_2\lambda \right\} \\
&=\inf _{\lambda >max\{\alpha,1-\alpha\} }g_2(X,m,\lambda),
\end{align*}where $h^{\lambda c}$ can be found in Proposition  \ref{prop:3-1}. We complete the proof.
\hfill$\Box$

\section{The Conclusions}\label{sec5}

Inspired by \cite{BDT20}, the paper analyzes the relevant properties of the robust optimized certainty equivalents as risk measures for loss positions with distribution uncertainty.   Based on the robust optimized certainty equivalents, we propose the robust generalized quantiles, which is a natural generalization for the quantiles.   Furthermore, we focus on two kinds of specific robust expectiles corresponding to two penalization functions $\varphi_{1}$ and $\varphi_{2}$.
The robust expectiles with  $\varphi_{1}$ are proved to be coherent risk measures, and the dual representation theorems are established.  The results are a development and complement to \cite{BKMR14} and \cite{BDT20}.  Besides, we also study the influences of penalization functions on the robust expectiles and compare them with expectiles for some specific prior distributions by numerical simulations.
\newpage


\end{document}